\newtheorem{theorem}{Theorem}
\newtheorem{lemma}{Lemma}
\newtheorem{proposition}{Proposition}
\newcommand*{\vertbar}{\rule[-1ex]{0.5pt}{2.5ex}}
\begin{document}

%%%%%%%%%%%%%%%%%%%%%%%%%%%%%%%%%%%%%%%%%%%%%%%%%%%%%%%%%%%%%%%%%%%%%%%%%%%%%%
% NOTE: To produce blinded version, replace "1" with "0" below.
\newcommand{\blind}{1}

\if1\blind
{
\title{Bayesian Flexible Modelling of Spatially Resolved Transcriptomic Data}
\author{\small Arhit Chakrabarti}
\author[1, a]{\small Yang Ni}
\author{\small Bani K. Mallick}
\affil{\footnotesize Department of Statistics, Texas A\&M University, College Station, TX}
\affil[a]{\footnotesize All correspondence should be addressed to \href{mailto:yni@stat.tamu.edu}{yni@stat.tamu.edu}}
% \affil[a]{\footnotesize All correspondence should be addressed to \href{mailto:arhit.chakrabarti@stat.tamu.edu}{arhit.chakrabarti@stat.tamu.edu}}
% \affil[a]{\footnotesize All correspondence should be addressed to \href{mailto:bmallick@stat.tamu.edu}{bmallick@stat.tamu.edu}}
\date{}

  \maketitle
} \fi

\if0\blind
{
  \bigskip
  \bigskip
  \bigskip
  \begin{center}
    {\LARGE\bf Bayesian Flexible Modelling of Spatially Resolved Transcriptomic Data}
\end{center}
  \medskip
} \fi

\begin{abstract}
\sloppy Single-cell RNA-sequencing technologies may provide valuable insights to the understanding of the composition of different cell types and their functions within a tissue. Recent technologies such as \emph{spatial transcriptomics}, enable the measurement of gene expressions at the single cell level along with the spatial locations of these cells in the tissue. Dimension-reduction and spatial clustering are two of the most common exploratory analysis strategies for spatial transcriptomic data. However, existing dimension reduction methods may lead to a loss of inherent dependency structure among genes at any spatial location in the tissue and hence do not provide insights of gene co-expression pattern. In spatial transcriptomics, the matrix-variate gene expression data, along with spatial co-ordinates of the single cells, provides information on both gene expression dependencies and cell spatial dependencies through its row and column covariances. In this work, we propose a flexible Bayesian approach to simultaneously estimate the row and column covariances for the matrix-variate spatial transcriptomic data. The posterior estimates of the row and column covariances provide data summaries for downstream exploratory analysis. We illustrate our method with simulations and two analyses of real data generated from a recent spatial transcriptomic platform. Our work elucidates gene co-expression networks as well as clear spatial clustering patterns of the cells. %It also lays down the foundations for future research on Bayesian nonparametrics in spatial transcriptomic data.
\end{abstract}
\bigskip
\sloppy \noindent%
{\it Keywords:}  Bayesian nonparametrics, spatial clustering, gene co-expression network, Cholesky factorization, Blocked Gibbs sampling.
\par\bigskip\bigskip
\newpage
\section{Introduction}
Single-cell RNA-sequencing technologies have been used to create molecular profiles for individual cells, which may provide valuable insights to the understanding of the composition of different cell types and their functions within a tissue. With newer technologies such as \emph{spatial transcriptomics}, it is now possible to measure gene expressions at the single cell level along with the information of spatial locations of these cells in the tissue. Such technologies include the earlier fluorescence in situ hybridization (FISH) based approaches (e.g., seqFISH \citep{seqFISH} and MERFISH \citep{MERFISH}), sequencing-based methods (e.g., 10x Visium \citep{10xVisium} and Slide-seq \citep{SlideSeq}), and the spatially-resolved transcript amplicon readout mapping (STARmap) \citep{STARmap}; see \citealp{ST_review} for a review of different spatial transcriptomic technologies. Spatial transcriptomic data  bring new scientific questions and statistical challenges to its analysis and interpretation. \par
A typical first step for processing genomics data is dimension reduction. There are numerous dimension-reduction techniques that have been routinely performed as a part of standard scRNA-seq data analysis pipeline, e.g., principal component analysis (PCA), weighted PCA (\citealp{WPCA}), t-distributed stochastic neighbor embedding (t-SNE,  \citealp{t-SNE}), and uniform manifold approximation and projection (UMAP, \citealp{umap}). These dimension-reduction techniques have also been applied in the context of spatial transcriptomic data. More specifically to spatial transcriptomic data, spatial clustering is one of the most common exploratory analysis strategies. Spatial clustering aims to use spatial transcriptomic information to cluster cells in the tissue into multiple spatial clusters, thereby segmenting the entire tissue into multiple tissue structures or domains. This segmentation of the tissue structure may aid in the understanding  of spatial and functional organization of the tissue. Common spatial clustering methods for spatial transcriptomic data include SpaGCN \citep{SpaGCN}, the hidden Markov random field model \citep{HMRF}, BayesSpace \citep{BayesSpace}, SpatialPCA \citep{SpatialPCA}, and SC-MEB \citep{SC-MEB}. The majority of the spatial clustering methods first involve a dimension reduction step on the expression matrix using some standard technique followed by spatial clustering of the estimated low-dimensional embeddings. In particular, PCA is routinely required in many software packages used for spatial transcriptomic analyses as a pre-processing step for downstream clustering analysis like \texttt{BayeSpace} \citep{BayesSpace}, \texttt{SpaGCN} \citep{SpaGCN}, \texttt{SC-MEB} \citep{SC-MEB}, \texttt{Seurat} \citep{seurat1}, etc. Most dimension reduction methods are not model-based and hence the low-dimensional embeddings are considered to be error free. Moreover, dimension reduction and spatial clustering methods separately optimize two different loss functions, which may not be ideal if considered sequentially. More recently, attempts have been made to simultaneously achieve dimension reduction and spatial clustering \citep{DRSC} to overcome the potential drawbacks of sequentially performing dimension reduction and spatial clustering. However, though convenient for computational purposes, dimension reduction techniques may lead to a loss of the inherent dependency structure among genes (e.g., co-expression) at any spatial location in the tissue. \par
In many spatial transcriptomic studies (e.g., STARmap), the expression data are collected on a moderate number of genes for a large number of single cells along with their spatial information in the tissue. In such cases, it may be of interest to understand the association between the observed gene set or some subset of target genes, along with spatial clustering of the single cells. The existing methods do not have such provision of understanding the genetic association. More concretely, the expression data  observed for a set of $N$ genes over a relatively large number $n$ of single cells, constitute a matrix of expression data. The expression data are also accompanied with the $n\times d$ spatial co-ordinates of the single cells, where the dimension $d = 2 \text { or } 3$ depends on the profiling method used. The matrix-variate spatial transcriptomic data provide information on both gene expression dependencies and cell spatial dependencies through the row and column covariances (correlations) of the matrix-variate data. Gaussian processes are commonly used to model spatial data, which typically involve the specification of spatial dependence in the form of a covariance matrix/kernel. Existing spatial covariance estimation methods ignore the dependency structure among the rows (genes in our case) of the matrix-variate data and often rely on a parametric assumption on the covariance kernel. 
In this paper, we propose a flexible Bayesian approach to simultaneously estimate the row and column covariances for the matrix-variate spatial transcriptomic data without fixing a parametric column covariance kernel or assuming the rows to be independent. Moreover, the proposed approach is computationally efficient for a large number of spatial locations (i.e., cells). \par
The proposed method takes as input the spatial gene expression matrix after standard log-normalization and the spatial coordinates of the single cells in the tissue. The output from our methodology gives the joint posterior estimates of both the row and column covariances for the matrix-variate spatial transcriptomic data. These posterior covariance  matrices are summaries of gene and cell dependencies and may be used for further downstream analyses. For example, the estimated column covariance matrix may be used for spatial clustering of the cells in the tissue whereas the estimated row covariance matrix may be used to construct a gene co-expression network. 
%Our technique for estimating spatial covariance is similar to the method proposed in \citealp{brian}. However, while \citealp{brian} assumed independence among data within a location, our genomic data exhibits dependence that creates additional complexities. Ignoring this correlation severely impacts the estimation of the spatial covariance itself, as we have found in our analysis.
\par
The rest of the paper is organized as follows. Section \ref{sec:prelims} gives a brief overview of some preliminaries needed for the remainder of the paper. In Section \ref{sec:cholesky_MN}, we mention the theoretical backdrop of our proposed methodology introduced in Section \ref{sec:methology_single_sample}. We extend our methodology for the case when we have multiple independent spatial transcriptomic data on a set of common genes in Section \ref{sec:methology_multiple_sample}. Section \ref{sec:simulations} provides simulations to illustrate our method, comparing the performance of our proposed method with the existing method of Bayesian nonparametric spatial covariance estimation. Sections \ref{sec:real_data_STARmap_genes} presents two different analyses of a real spatial transcriptomic dataset collected from the STARmap platform \citep{STARmap}. The paper concludes with a brief discussion in Section \ref{sec:discussion}.

\section{Preliminaries}\label{sec:prelims}
\subsection{Kullback-Leibler divergence}
\sloppy The Kullback-Leibler (KL) divergence between two probability measures $P$ and $Q$ is defined as 
$\mathds{D}_{KL}(P\,\Vert\, Q) = \int \log\left(\frac{\mathrm{d} P} { \mathrm{d}Q}\right)$. 
Intuitively, if $P$ is the true data generating distribution and we consider a model with distribution $Q$, then the KL divergence is the expected loss of information in using the distribution $Q$ to model the true underlying distribution $P$. If $P$ and $Q$ are $n$-dimensional multivariate normal distributions with zero means and covariance matrices $\Sigma_1$ and $\Sigma_2$, respectively, then the KL divergence has a closed form,
\begin{equation}\label{eq:kln}
    \mathds{D}_{KL}(\mathcal{N}_n(0, \Sigma_1) \, \Vert \, \mathcal{N}_n(0, \Sigma_2)) = \frac{1}{2}\left[\text{tr}(\Sigma_2^{-1}\Sigma_1) + \log|\Sigma_2| - \log|\Sigma_1| - n\right],
\end{equation}
where $\text{tr}()$ denotes the trace of a matrix and $|\cdot|$ denotes the determinant of a matrix.
\subsection{Matrix-normal distribution}
The matrix-normal distribution is a generalization of multivariate normal distribution to random matrices. Let $\bm{Y}$ be an $N\times n$ random matrix. The random matrix $\bm{Y}$ is said to be distributed according to the matrix-normal distribution if its probability density function is given by,
\[ p(\bm{Y}\mid M, \Lambda, \Sigma) = \frac{\exp\left( -\frac{1}{2}\text{tr}[\Sigma^{-1}(\bm{Y} - M)^\top \Lambda^{-1}(\bm{Y} - M)
] \right)}{(2\pi)^{Nn/2} |\Sigma|^{N/2}  |\Lambda|^{n/2} }.\]
This is denoted as $\bm{Y}\sim \mathcal{M N}_{N, n} (M, \Lambda, \Sigma)$, where $M$ is the $N\times n$ mean matrix, $\Lambda$ is the $N \times N$ row covariance matrix, and $\Sigma$ is the $n \times n$ column covariance matrix. The matrix-normal distribution is related to a multivariate normal distribution as,
    \begin{equation}
    \label{eq:matrix_normal_equiv_def}
        \bm{Y}\sim \mathcal{M N}_{N, n} (M, \Lambda, \Sigma) \Leftrightarrow \text{vec}(\bm{Y}) \sim \mathcal{N}_{Nn}(\text{vec}(M), \Sigma\otimes \Lambda),
    \end{equation}
    where $\otimes$ denotes the Kronecker product and $\text{vec}(\cdot)$ denotes the vectorization of a matrix. The next lemma generalizes \eqref{eq:kln} to the KL divergence between two matrix-normal distributions.
\begin{lemma}
\label{KL_matrix_normal_lemma}
Consider two centered matrix-normal distributions  $P : \mathcal{MN}_{N,n}\left(0, \Lambda_1, \Sigma_1 \right)$ and $Q : \mathcal{MN}_{N,n}\left(0, \Lambda_2, \Sigma_2 \right)$. The KL divergence between $P$ and $Q$ is given by,
\begin{equation}
\label{KL_matrix_normal}
    \mathds{D}_{KL}(P\, \Vert \, Q) = \frac{1}{2} \left[\textup{tr}\left((\Sigma_2^{-1}\Sigma_1) \otimes (\Lambda_2^{-1}\Lambda_1)\right) - N \log\frac{|\Sigma_1|}{|\Sigma_2|} - n \log\frac{|\Lambda_1|}{|\Lambda_2|} - Nn \right].
\end{equation}
\end{lemma}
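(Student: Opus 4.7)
The plan is to reduce the matrix-normal KL divergence to the multivariate-normal case via the vectorization identity \eqref{eq:matrix_normal_equiv_def}, and then simplify using standard Kronecker product identities. Since the KL divergence depends only on the underlying probability measures, and $\bm{Y}\sim P$ is equivalent to $\text{vec}(\bm{Y})\sim\mathcal{N}_{Nn}(0,\Sigma_1\otimes\Lambda_1)$ (and analogously for $Q$), the left-hand side of \eqref{KL_matrix_normal} equals $\mathds{D}_{KL}\bigl(\mathcal{N}_{Nn}(0,\Sigma_1\otimes\Lambda_1)\,\Vert\,\mathcal{N}_{Nn}(0,\Sigma_2\otimes\Lambda_2)\bigr)$.

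Applying \eqref{eq:kln} with dimension $Nn$ and covariance matrices $\Sigma_i\otimes\Lambda_i$ gives
\[
\mathds{D}_{KL}(P\,\Vert\,Q)=\tfrac{1}{2}\bigl[\text{tr}\bigl((\Sigma_2\otimes\Lambda_2)^{-1}(\Sigma_1\otimes\Lambda_1)\bigr)+\log|\Sigma_2\otimes\Lambda_2|-\log|\Sigma_1\otimes\Lambda_1|-Nn\bigr].
\]
Next I would invoke the three Kronecker-product identities: the inverse $(\Sigma_2\otimes\Lambda_2)^{-1}=\Sigma_2^{-1}\otimes\Lambda_2^{-1}$, the mixed-product property $(\Sigma_2^{-1}\otimes\Lambda_2^{-1})(\Sigma_1\otimes\Lambda_1)=(\Sigma_2^{-1}\Sigma_1)\otimes(\Lambda_2^{-1}\Lambda_1)$, and the determinant identity $|\Sigma_i\otimes\Lambda_i|=|\Sigma_i|^{N}|\Lambda_i|^{n}$ (valid because $\Sigma_i$ is $n\times n$ and $\Lambda_i$ is $N\times N$).

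Substituting these three identities into the previous display immediately produces the claimed expression \eqref{KL_matrix_normal}: the trace term becomes $\text{tr}((\Sigma_2^{-1}\Sigma_1)\otimes(\Lambda_2^{-1}\Lambda_1))$, while $\log|\Sigma_2\otimes\Lambda_2|-\log|\Sigma_1\otimes\Lambda_1|=N\log(|\Sigma_2|/|\Sigma_1|)+n\log(|\Lambda_2|/|\Lambda_1|)$, which rewrites as $-N\log(|\Sigma_1|/|\Sigma_2|)-n\log(|\Lambda_1|/|\Lambda_2|)$. There is no real obstacle here; the only mild subtlety is bookkeeping the dimensions correctly in the determinant identity (one could equivalently collapse the trace using $\text{tr}(A\otimes B)=\text{tr}(A)\text{tr}(B)$, but leaving it as a Kronecker trace matches the form stated in the lemma). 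Overall the argument is a direct corollary of \eqref{eq:kln} together with the vectorization equivalence \eqref{eq:matrix_normal_equiv_def}.
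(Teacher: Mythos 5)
Your proposal is correct and follows essentially the same route as the paper's proof: reduce to the multivariate normal case via the vectorization identity \eqref{eq:matrix_normal_equiv_def}, apply \eqref{eq:kln}, and simplify with the Kronecker inverse, mixed-product, and determinant identities. The only difference is that you spell out the intermediate Kronecker identities slightly more explicitly than the paper does.
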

\begin{proof}
     Using the equivalent representation of matrix-normal distribution in \eqref{eq:matrix_normal_equiv_def} and from \eqref{eq:kln}, we have,
    \begin{align*}
        \mathds{D}_{KL}(P\, \Vert \, Q) & = \frac{1}{2} \left[\textup{tr}\left((\Sigma_2^{-1}\Sigma_1) \otimes (\Lambda_2^{-1}\Lambda_1)\right) + \log|\Sigma_2 \otimes \Lambda_2| - \log|\Sigma_1 \otimes \Lambda_1|  - Nn \right] \\
        & = \frac{1}{2} \left[\textup{tr}\left((\Sigma_2^{-1}\Sigma_1) \otimes (\Lambda_2^{-1}\Lambda_1)\right) + \log\left(|\Sigma_2|^N |\Lambda_2|^n\right) - \log\left(|\Sigma_1|^N |\Lambda_1|^n\right)  - Nn \right] \\
         & = \frac{1}{2} \left[\textup{tr}\left((\Sigma_2^{-1}\Sigma_1) \otimes (\Lambda_2^{-1}\Lambda_1)\right) - N \log\frac{|\Sigma_1|}{|\Sigma_2|} - n \log\frac{|\Lambda_1|}{|\Lambda_2|}  - Nn \right]
    \end{align*}
The second equality follows as $|A \otimes B| = \left(|A|^p |B|^q\right)$ for $A\in \mathds{R}^{q\times q}, B\in \mathds{R}^{p\times p}$, and the last equality follows from rearranging the terms.
\end{proof}
\section{Sparse approximate Cholesky factorization for matrix-normal distributions}\label{sec:cholesky_MN}
The precision matrix, also known as the inverse of the covariance matrix, encodes the partial correlations. For a random vector distributed as multivariate normal, the precision matrix is used to learn about the conditional independencies between the variables. If two variables are conditionally independent given all the other variables, the corresponding elements of the precision matrix are zero. Formally, for a random vector $\boldsymbol{Y} = (Y_1, \dots, Y_n) \sim \mathcal{N}_n(0, \Sigma)$, the precision matrix is given by $\Omega = \Sigma^{-1}$. Then, $\Omega_{ij} = 0 \Leftrightarrow Y_i \perp Y_j|\boldsymbol{Y}\setminus\{Y_i, Y_j\}$, where $\perp$ denotes the independence relationship. When the dimension $n$ is large, it is difficult to store the entire dense covariance matrix $\Sigma$ in memory and to compute its inverse (e.g., to evaluate the density). 
As such an inversion requires $\mathcal{O}(n^3)$ operations, many authors have proposed to consider approximate sparse Cholesky factorization of the precision matrix. Numerous algorithms have been proposed to improve the computational complexity of the Cholesky factorization.\par
We consider the case of matrix-normal distributions, $\mathcal{M N}_{N, n} (0, \Lambda, \Sigma)$. In many problems, the number $n$ of columns of the matrix-variate data (the number of cells in our case) is large. Similarly to the multivariate normal case described above, it becomes difficult to store the dense matrix $\Sigma$ in memory and the computation of $\Sigma^{-1}$ becomes infeasible. We propose an optimal approximate sparse Cholesky factor for the column precision matrix $\Sigma^{-1}$. Consider
\begin{equation}
\label{KL_minimization}
    L = \text{argmin}_{\hat{L} \in \mathscr{S}} \mathds{D}_{KL}\left( \mathcal{MN}_{N,n}\left(0, \Lambda, \Sigma \right) \Vert \mathcal{MN}_{N,n}\left(0, \Lambda, (\hat{L}\hat{L}^\top)^{-1} \right)\right)
\end{equation}
as an approximate Cholesky factor for $\Sigma^{-1}$ where $\mathscr{S}=\{A \in \mathds{R}^{n\times n} : A_{ij} \neq 0 \Leftrightarrow(i, j) \in S\}$ and $S\subset \{1,\dots, n\}\times \{1,\dots, n\}$ is the index set of the non-zero entries of a sparse lower-triangular matrix. Denoting by $\#$ the cardinality of a set, we characterize its property in the following theorem.
\begin{theorem}
\label{schafer_theorem_extension}
\sloppy The nonzero entries of the $i$th column of L as defined in \eqref{KL_minimization} are given by
\begin{equation}
    L_{s_i, i} =  \frac{\Sigma_{s_i, s_i}^{-1} \textbf{e}_1}{\sqrt{\textbf{e}_1^\top \Sigma_{s_i, s_i}^{-1} \textbf{e}_1}},
\end{equation}
where $s_i = \{j:(j,i)\in S\}$, $\Sigma_{s_i, s_i}$ is the restriction of $\Sigma$ to the set of indices $s_i$, $\Sigma_{s_i, s_i}^{-1} = (\Sigma_{s_i, s_i})^{-1}$, and $\mathbf{e}_1\in \mathds{R}^{\# s_i \times1}$ is the vector with the first entry equal to one and all other entries equal to zero. Using this formula, $L$ can be computed in computational
complexity $\mathcal{O}\left(\# S + ( \max_{1\leq i \leq n} \# s_i)^2\right)$ in space and $\mathcal{O}\left(\sum_{i=1}^{n} \# s_i^3\right)$ in time.
\end{theorem}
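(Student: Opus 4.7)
The plan is to reduce the matrix-normal KL minimization in \eqref{KL_minimization} to the corresponding multivariate-normal minimization, for which the closed-form sparse Cholesky factor in the statement is already known (Sch\"afer, Katzfuss, and Owhadi). The key observation is that the row covariance $\Lambda$ appears identically in both the target distribution $P$ and the approximating distribution $Q$, so it should drop out of the objective entirely, leaving an optimization that involves only $\Sigma$ and $\hat L$.

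To carry out the reduction, I would apply Lemma \ref{KL_matrix_normal_lemma} with $\Lambda_1=\Lambda_2=\Lambda$, $\Sigma_1=\Sigma$, and $\Sigma_2=(\hat L\hat L^\top)^{-1}$. Under this choice, $\Lambda_2^{-1}\Lambda_1=I_N$ so $\log(|\Lambda_1|/|\Lambda_2|)=0$, and the Kronecker trace identity $\operatorname{tr}(A\otimes I_N)=N\,\operatorname{tr}(A)$ collapses the first term on the right-hand side of \eqref{KL_matrix_normal}. Comparing the resulting expression with \eqref{eq:kln} then gives the clean identity
\[
\mathds{D}_{KL}(P\,\Vert\,Q)\;=\;N\cdot\mathds{D}_{KL}\!\bigl(\mathcal{N}_n(0,\Sigma)\,\Vert\,\mathcal{N}_n(0,(\hat L\hat L^\top)^{-1})\bigr).
\]
Since $N>0$ is a fixed positive constant that does not depend on $\hat L$, the minimizer of \eqref{KL_minimization} over the sparsity class $\mathscr{S}$ coincides with the minimizer of the classical sparse-Cholesky KL problem for the multivariate normal $\mathcal{N}_n(0,\Sigma)$, independently of $\Lambda$.

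Finally I would invoke the existing multivariate-normal result, which asserts that under the sparsity pattern $S$ the column-wise minimizer is precisely $L_{s_i,i}=\Sigma_{s_i,s_i}^{-1}\mathbf{e}_1/\sqrt{\mathbf{e}_1^\top\Sigma_{s_i,s_i}^{-1}\mathbf{e}_1}$; this hands us the formula in the statement directly. The space and time complexity bounds transfer verbatim from that reference, since after the reduction the dominant cost per column is the inversion of the $\#s_i\times\#s_i$ submatrix $\Sigma_{s_i,s_i}$ and a single matrix-vector multiplication, both of which are unaffected by the presence of $\Lambda$. The ``hard part'' here is really just the algebraic bookkeeping in the first reduction: getting the Kronecker trace and the two log-determinant ratios to separate cleanly so that $\Lambda$ cancels. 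Once that step is verified, the theorem follows immediately from the known multivariate result with no further optimization argument required.
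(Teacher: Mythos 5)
Your proposal is correct and follows essentially the same route as the paper's own proof: apply Lemma \ref{KL_matrix_normal_lemma} with $\Lambda_1=\Lambda_2=\Lambda$ so that the row-covariance terms cancel and the Kronecker trace collapses via $\operatorname{tr}(A\otimes \mathds{I}_N)=N\operatorname{tr}(A)$, yielding $\mathds{D}_{KL}(P\,\Vert\,Q)=N\,\mathds{D}_{KL}(\mathcal{N}_n(0,\Sigma)\,\Vert\,\mathcal{N}_n(0,(\hat L\hat L^\top)^{-1}))$, after which the formula and complexity bounds follow from Theorem 2.1 of the cited Sch\"afer et al.\ reference. No gaps; this matches the paper's argument step for step.
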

\begin{proof}
Note that using Lemma \ref{KL_matrix_normal_lemma},
\begin{align*}
    & \mathds{D}_{KL}\left( \mathcal{MN}_{N,n}\left(0, \Lambda, \Sigma \right) \Vert \mathcal{MN}_{N,n}\left(0, \Lambda, (\hat{L}\hat{L}^\top)^{-1} \right)\right)\\
    =\  & \frac{1}{2}\left[\text{tr}\left( \left(\hat{L}\hat{L}^\top\, \Sigma \right) \otimes \mathds{I}_N \right) - N  \log\frac{|\Sigma|}{|(\hat{L}\hat{L}^\top)^{-1}|} - N n\right]\\
    =\  & \frac{N}{2}\left[\text{tr}\left( \hat{L}\hat{L}^\top\, \Sigma \right) - \log\frac{|\Sigma|}{|(\hat{L}\hat{L}^\top)^{-1}|} - n\right]\\
    =\  & \frac{N}{2} \left[\ \text{tr}\left( \hat{L}\hat{L}^\top\, \Sigma \right) +  \log|(\hat{L}\hat{L}^\top)^{-1}| - \log|\Sigma| - n\ \right]\\
    = \ & N \  \mathds{D}_{KL}\left( \mathcal{N}_{n}\left(0, \Sigma \right) \Vert \,\mathcal{N}_{n}\left(0, (\hat{L}\hat{L}^\top)^{-1} \right)\right)
\end{align*}
Then, 
\begin{align*}
    L & = \text{argmin}_{\hat{L} \in \mathscr{S}} \mathds{D}_{KL}\left( \mathcal{MN}_{N,n}\left(0, \Lambda, \Sigma \right) \Vert \mathcal{MN}_{N,n}\left(0, \Lambda, (\hat{L}\hat{L}^\top)^{-1} \right)\right)\\
    & = \text{argmin}_{\hat{L} \in \mathscr{S}}\,N \, \mathds{D}_{KL}\left( \mathcal{N}_{n}\left(0, \Sigma \right) \Vert \,\mathcal{N}_{n}\left(0, (\hat{L}\hat{L}^\top)^{-1} \right)\right)
\end{align*}
The proof follows immediately from Theorem 2.1 of \citealp{schafer_KL}.
\end{proof}
Theorem \ref{schafer_theorem_extension} has the advantage of giving the best possible Cholesky factor of the column precision matrix as measured by KL divergence for a given sparsity pattern in a computationally efficient manner when $\# s_i, \forall \, i$ is chosen to be much smaller than $n$. For the case of a Gaussian process, the matrix $\Sigma$ is the covariance matrix generated from some kernel function $K(\,.\,,\,.\,)$ for a set of spatial locations $\{\boldsymbol{s}_i, \, i =1, \dots, n\}$. Specifically, the $(i, j)$th element of the covariance matrix is given by $\Sigma_{ij} = K(\boldsymbol{s}_i, \boldsymbol{s}_j)$ and the $j$th column of $\Sigma$ corresponds to the covariances of the $j$th spatial location $\boldsymbol{s}_j$ with $\{\boldsymbol{s}_i, \, i =1, \dots, n\}$.  \citealp{schafer_KL} provides several algorithms to order the rows and columns of the precision matrix corresponding to the spatial locations and selecting a sparsity pattern for the approximation. They further provide theoretical bounds for the computational complexity and approximation error of their proposed method. Theorem \ref{schafer_theorem_extension} essentially states that if we consider a ``maximum-minimum" ordering of the spatial locations, then we can efficiently get accurate sparse Cholesky factor of the column precision matrix.\par
Next, we theoretically show that a method ignoring the row dependence is sub-optimal in the KL sense when the true distribution has row dependence.
\begin{proposition}
\label{KL_better_lemma}
    Consider the centered matrix-normal distribution  $P : \mathcal{MN}_{N,n}\left(0, \Lambda, \Sigma \right)$, where both the column and row covariance matrices are positive definitive. Let $L$ be any $n\times n$ lower-triangular matrix. Denote by $Q : \mathcal{MN}_{N,n}\left(0, \mathds{I}_N, \left(LL^\top\right)^{-1} \right)$ and $R : \mathcal{MN}_{N,n}\left(0, \Lambda, \left(LL^\top\right)^{-1} \right)$. Further if $ \lambda_1 < \dots < \lambda_N$ denote the eigenvalues of $\Lambda$, then there exists a $\lambda^*$ and $\lambda^{**}$ depending on $L$ and $\Sigma$, such that for $\lambda_1 \geq \lambda^*$ or $\lambda_N \leq \lambda^{**}$, 
\begin{equation}
\label{KL_better}
    \mathds{D}_{KL}(P\, \Vert \, Q) \geq \mathds{D}_{KL}(P\, \Vert \, R),
\end{equation}
with equality holding if $\Lambda = \mathds{I}_N$.
\end{proposition}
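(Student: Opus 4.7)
The plan is to apply Lemma \ref{KL_matrix_normal_lemma} to both KL divergences, observe that the terms involving $\Sigma$ and $LL^\top$ are identical and cancel in the difference, and reduce what remains to a sum of a convex scalar function evaluated at the eigenvalues of $\Lambda$. Concretely, since $P$ has column covariance $\Sigma$ while both $Q$ and $R$ have column covariance $(LL^\top)^{-1}$, the $N\log(|\Sigma|/|(LL^\top)^{-1}|)$ contribution appears identically in $\mathds{D}_{KL}(P\Vert Q)$ and $\mathds{D}_{KL}(P\Vert R)$ and drops out of the difference. For the trace terms, the Kronecker identity $\textup{tr}(A\otimes B)=\textup{tr}(A)\textup{tr}(B)$ together with $T := \textup{tr}(LL^\top \Sigma)$ makes the $Q$-trace equal $T\,\textup{tr}(\Lambda)$ (since $\Lambda_2^{-1}\Lambda_1 = \Lambda$) and the $R$-trace equal $TN$ (since $\Lambda_2^{-1}\Lambda_1 = \mathds{I}_N$); only $\mathds{D}_{KL}(P\Vert Q)$ retains the $n\log|\Lambda|$ row log-determinant. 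Assembling these pieces yields
\[
2\bigl[\mathds{D}_{KL}(P\Vert Q)-\mathds{D}_{KL}(P\Vert R)\bigr] \;=\; T\bigl(\textup{tr}(\Lambda)-N\bigr)\;-\;n\log|\Lambda|.
\]

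Diagonalizing $\Lambda$, the right-hand side becomes $\sum_{i=1}^{N} g(\lambda_i)$ with $g(\lambda) := T(\lambda-1) - n\log\lambda$. Since $\Sigma$ is positive definite and $L$ must be invertible for $(LL^\top)^{-1}$ to exist, $T>0$, so $g$ is strictly convex on $(0,\infty)$ with $g(1)=0$, a unique global minimum at $\lambda_0 = n/T$, and $g(\lambda)\to+\infty$ as $\lambda\to 0^+$ or $\lambda\to\infty$. Consequently either $g\geq 0$ everywhere (when $g(\lambda_0)\geq 0$) or $g$ has exactly two positive roots $\lambda^{**} < \lambda_0 < \lambda^{*}$; in either case one can pick $\lambda^{*}$ and $\lambda^{**}$ so that $g(\lambda)\geq 0$ whenever $\lambda\geq \lambda^{*}$ or $\lambda\leq \lambda^{**}$, and these thresholds depend only on $T$, hence only on $L$ and $\Sigma$.

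To close the argument, the ordering $\lambda_1<\cdots<\lambda_N$ implies that $\lambda_1\geq \lambda^{*}$ forces every $\lambda_i\geq \lambda^{*}$ and $\lambda_N\leq \lambda^{**}$ forces every $\lambda_i\leq \lambda^{**}$; in either case $g(\lambda_i)\geq 0$ for every $i$, so the sum is nonnegative and \eqref{KL_better} follows. When $\Lambda=\mathds{I}_N$ every $\lambda_i=1$ gives $g(\lambda_i)=g(1)=0$, so equality holds. The delicate step is the dichotomy in the second paragraph, where one must verify that the two branches of $\{g\geq 0\}$ are correctly described by the two roots of $g$ (with the whole of $(0,\infty)$ as the fallback when the minimum is already nonnegative); once the problem is reduced to the scalar function $g$, everything else is elementary bookkeeping with Kronecker products and log-determinants.
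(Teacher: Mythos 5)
Your proof is correct and follows essentially the same route as the paper's: both reduce the difference $\mathds{D}_{KL}(P\Vert Q)-\mathds{D}_{KL}(P\Vert R)$ via Lemma \ref{KL_matrix_normal_lemma} and the Kronecker trace identity to $\frac{1}{2}\bigl[(\textup{tr}(\Lambda)-N)\,\textup{tr}(LL^\top\Sigma)-n\log|\Lambda|\bigr]$, diagonalize $\Lambda$, and analyze the same strictly convex scalar function $g(\lambda)=T(\lambda-1)-n\log\lambda$ with $g(1)=0$ and minimizer $n/T$. If anything, your write-up is slightly more explicit about the two-root dichotomy and about why the ordering of the eigenvalues lets the conditions $\lambda_1\geq\lambda^*$ or $\lambda_N\leq\lambda^{**}$ control all of them, but the substance is identical.
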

\begin{proof}
    From Lemma \ref{KL_matrix_normal_lemma}, we have 
    \begin{align}
        \nonumber \mathds{D}_{KL}(P\, \Vert \, Q) & = \frac{1}{2}\left[\text{tr}\left( \left(LL^\top\Sigma\right )\otimes\Lambda\right) - N \log\frac{|\Sigma|}{|\left(LL^\top\right)^{-1}|} - n\log|\Lambda| - Nn \right] \\
        \nonumber & = \frac{1}{2}\left[N\left\{\text{tr}\left(LL^\top \Sigma\right) - \log\frac{|\Sigma|}{|\left(LL^\top\right)^{-1}|}  - n\right\} + (\text{tr}(\Lambda) - N)\text{tr}\left( LL^\top \Sigma\right) - n\log|\Lambda|\right]\\
        \label{KL_lemma_2_equation} & = \mathds{D}_{KL}(P\, \Vert \, R) + \frac{1}{2}\left[  (\text{tr}(\Lambda) - N)\text{tr}\left( LL^\top \Sigma\right) - n\log|\Lambda| \right]
    \end{align}
Note that $\epsilon = \text{tr}\left(LL^\top \Sigma\right) > 0$, as $\Sigma$ and $LL^\top$ are positive-definite. Since $0 <\lambda_1< \dots < \lambda_N$ are the eigenvalues of $\Lambda$,   
\begin{equation}
\label{eq:lambda_lemma_2}
    (\text{tr}(\Lambda) - N)\text{tr}\left( LL^\top \Sigma\right) - n\log|\Lambda| = \sum_{i = 1}^{N}\left\{(\lambda_i - 1)\epsilon - n\log\lambda_i\right\}
\end{equation}
It is easy to see that the function $f(\lambda) = (\lambda-1)\epsilon - n\log\lambda$ is convex and is minimized at $\lambda = n/\epsilon$. Further at the minimizer $n/\epsilon$, $f(\lambda) = 0$ if and only if $n/\epsilon = 1$. For $n/\epsilon \neq 1$, $f(\lambda) < 0$ at the minimizer. Since $f$ is convex, there exists a $\lambda^* \geq n/\epsilon$, i.e., a $\lambda^*$ depending on $L$ and $\Sigma$ such that for any $\lambda \geq \lambda^*$, $f(\lambda) \geq  0$. Similarly, there exists a $\lambda^{**} \leq n/\epsilon$ depending on $L$ and $\Sigma$ such that for any $\lambda \leq \lambda^{**}$, $f(\lambda) \geq  0$. The result follows immediately from \eqref{eq:lambda_lemma_2} and \eqref{KL_lemma_2_equation}. 
% Further, $\lambda = 1$ is a root of the function $f(\lambda)$, irrespective of the values of $n$ and $\epsilon$. If all the eigenvalues of the row covariance matrix are $1$, then from \eqref{eq:lambda_lemma_2}, $(\text{tr}(\Lambda) - N)\text{tr}\left( LL^\top \Sigma\right) - n\log|\Lambda|\, = 0$ and equality holds in \eqref{KL_lemma_2_equation}. 
\end{proof}
Proposition \ref{KL_better_lemma} suggests that existing spatial methods such as \cite{brian},  which ignore the row dependence (i.e., setting $\Lambda = \mathds{I}_N$), may not perform well for spatial transcriptomic data where the rows (i.e., genes) are correlated.

\section{Covariance estimation for matrix-normal distributions}\label{sec:methology_single_sample}
Consider an $N\times n$ matrix $\bm{Y}$ of spatial transcriptomic data where $N$ denotes the number of genes and $n$ denotes the number of cells measured at the spatial locations $\boldsymbol{s}_1, \dots, \boldsymbol{s}_n$,
\begin{equation}
\label{data}
\boldsymbol{Y} = 
    \begin{pmatrix}
    y_1^{(1)} & \cdots & y_n^{(1)}\\
    \vdots & \ddots & \vdots\\
    y_1^{(N)} & \cdots & y_n^{(N)}
    \end{pmatrix} =  
    \begin{pmatrix}
    \vertbar &   & \vertbar\\
     \boldsymbol{y}_1& \cdots & \boldsymbol{y}_n\\
    \vertbar &  & \vertbar
    \end{pmatrix}.
\end{equation}
Here $y_i^{(\ell)}$ is the expression of the $\ell$th gene in the $i$th cell at location $\boldsymbol{s}_i$. We order the spatial locations $\boldsymbol{s}_1, \dots, \boldsymbol{s}_n$, and hence the columns of $\boldsymbol{Y}$ according to a maximin ordering \citep{guinness, schafer_approx}, which sequentially adds to the ordering the location that maximizes the minimum distance from the locations already in the ordering. We model $\boldsymbol{Y}$ as a centered matrix-normal distribution,
\begin{equation}
\label{model}
    \boldsymbol{Y} \sim \mathcal{MN}_{N, n}(0, \Lambda, \Sigma ),
\end{equation}
where $\Lambda$ and $\Sigma$ are the row and column covariance matrices. We focus on problems where the number of spatial locations $n$ is much larger than the number of genes $N$. We further consider an ordered conditional independence assumption,
\begin{equation}
\label{conditional independence}
    p(\boldsymbol{y}_i\mid \boldsymbol{y}_{1:i-1},\Lambda, \Sigma) = p(\boldsymbol{y}_i\mid \boldsymbol{y}_{g_m(i)}, \Lambda, \Sigma), \hspace{0.6cm} i=2, \dots, n,
\end{equation}
where $g_m(i) \subset \{1, \dots, i-1\}$ is an index vector consisting of the indices of the $\min(m, i-1)$ nearest neighbours to $\boldsymbol{s}_i$ among those ordered previously. Note that \eqref{conditional independence} holds trivially for $m = n-1$. Many authors have demonstrated both numerically and theoretically that \eqref{conditional independence} holds (at least approximately) even for $m\ll n$ for many covariance functions in the context of Vecchia approximations of parametric covariance functions (see e.g., \citealp{vecchia, stein, dutta, guinness, vecchia_GP, katzfuss_guinness, schafer_KL}). 

\subsection{Bayesian regression model framework}
Consider the modified Cholesky decomposition of the column precision matrix,
\begin{equation}
    \label{cholesky}
    \Sigma^{-1} = \text{U}\text{D}^{-1}\text{U}^\top,
\end{equation}
where $\text{D}=\text{diag}(d_1,\dots, d_n)$ is a diagonal matrix with positive entries $d_i >0$, and $\text{U}$ is a unit upper triangular matrix, i.e., an upper triangular matrix with diagonals equal to one. The ordered conditional independence in \eqref{conditional independence} implies that $\text{U}$ is sparse with at most $m$ nonzero off-diagonal elements per column. Theorem \ref{schafer_theorem_extension} ensures that the sparse approximate Cholesky factor of $\Sigma^{-1}$ minimizes the KL divergence of our proposed model from the true underlying matrix-variate distribution. Defining $\boldsymbol{u}_i = \text{U}_{g_m(i)}$ as the nonzero off-diagonal entries in the $i$th column of $\text{U}$, the model \eqref{model} can be written as a series of linear regression models \citep{huang}:
\begin{align}
\label{regression}
    \begin{aligned}
     p(\boldsymbol{Y}\mid \Lambda , \Sigma) & = \prod_{i=1}^{n} p(\boldsymbol{y}_i\mid \boldsymbol{y}_{g_m(i)}, \Lambda, \Sigma)\\
     & = \prod_{i=1}^{n} \mathcal{N}_N(\boldsymbol{y}_i\mid  \boldsymbol{X}_i\boldsymbol{u}_i, d_i\Lambda), \\
    \end{aligned}
\end{align}
where the ``design matrix" $\boldsymbol{X}_i$ consists of the observations at the $m$ neighboring locations of $\boldsymbol{s}_i$, stored in the columns of $\boldsymbol{Y}$ with indices $g_m(i)$, i.e. $\boldsymbol{X}_i$  is an $N\times m$ matrix with the $\ell$th row $-\boldsymbol{y}_{g_m(i)}^{(\ell)\top}$. Note that under this notation $\textbf{X}_1 = \boldsymbol{0}_{\footnotesize N\times1}$. For efficient Bayesian inference of the model parameters, we assign conjugate priors. For $i=1, \dots, n,$
\begin{alignat}{4}
\label{priors}
\begin{aligned}
     &\boldsymbol{u}_i\mid d_i &\overset{ind}{\sim} &\ \ \mathcal{N}_m(\boldsymbol{0}, d_i \boldsymbol{V}_i), \\
     & d_i &\overset{ind}{\sim} &\ \ \mathcal{IG}(\alpha_i, \beta_i), \\
     &\Lambda  &\overset{ind}{\sim} &\ \ \mathcal{IW}(\nu, \Psi).
    \end{aligned}
\end{alignat}
If the number $N$ of observed genes  is moderate, an Inverse-Wishart ($\mathcal{IW}$) prior on the row covariance matrix $\Lambda$ leads to closed form expressions for the full conditional distributions.  When $N$ is large, Bayesian latent factor models could be used instead \citep{factormodel_mikewest1, factormodel_mikewest2, factormodel_ABDD}. 
\subsection{Full conditional distributions}
In this section, we derive the full-conditional distributions of the model parameters $\boldsymbol{u} = \{\boldsymbol{u}_1, \dots, \boldsymbol{u}_n\}$, $\boldsymbol{d} = \{d_1, \dots, d_n\}$, and $\Lambda$. Specifically, the full-conditional distribution of $(\boldsymbol{u}, \boldsymbol{d})$ is given by,
\begin{align}\label{full_conditional_u_d}
    \nonumber p(\boldsymbol{u}, \boldsymbol{d} \mid \boldsymbol{Y}, \Lambda) & \propto p(\boldsymbol{Y}\mid \boldsymbol{u}, \boldsymbol{d}, \Lambda )\, p(\boldsymbol{u}, \boldsymbol{d})\\
    & \propto \prod_{i = 1}^{n} d_i^{-\left(\frac{m + N}{2} + \alpha_i + 1\right)} \exp\left(-\frac{1}{d_i}\left(\beta_i + \frac{1}{2}u_i^\top \boldsymbol{V}_i^{-1} u_i + \frac{1}{2}(\boldsymbol{y}_i - \boldsymbol{X}_i\boldsymbol{u}_i)^\top \Lambda^{-1} (\boldsymbol{y}_i - \boldsymbol{X}_i\boldsymbol{u}_i)\right) \right).
\end{align}
% REMOVING THE DETAILED DERIVATION
% Now, note that the exponent part is given by,
% \begin{align*}
%     & \beta_i + \frac{1}{2}\left\{\boldsymbol{u}_i^\top\left(\boldsymbol{V}_i + \boldsymbol{X}_i^\top\Lambda^{-1}\boldsymbol{X}_i\right)\boldsymbol{u}_i - 2 (\boldsymbol{X}_i^\top\Lambda^{-1}\boldsymbol{y}_i)^\top\boldsymbol{u}_i + \boldsymbol{y}_i^\top\Lambda^{-1}\boldsymbol{y}_i\right\}\\
%     = \ & \beta_i + \frac{1}{2}\left\{\boldsymbol{u}_i^\top \boldsymbol{G}_i\boldsymbol{u}_i - 2 \boldsymbol{H}_i^\top\boldsymbol{u}_i + \boldsymbol{y}_i^\top\Lambda^{-1}\boldsymbol{y}_i \right\}\\
%     =\ &  \beta_i + \frac{1}{2}\left\{(\boldsymbol{u}_i - \boldsymbol{G}_i^{-1}\boldsymbol{H}_i)^\top \boldsymbol{G}_i (\boldsymbol{u}_i - \boldsymbol{G}_i^{-1}\boldsymbol{H}_i) + \boldsymbol{y}_i^\top\Lambda^{-1}\boldsymbol{y}_i - \boldsymbol{y}_i^\top \Lambda^{-1}\boldsymbol{X}_i \boldsymbol{G}_i^{-1}\boldsymbol{X}_i^\top\Lambda^{-1}\boldsymbol{y}_i \right\}\\
%     =\ &  \beta_i + \frac{1}{2}\left\{(\boldsymbol{u}_i- \boldsymbol{G}_i^{-1}\boldsymbol{H}_i)^\top \boldsymbol{G}_i (\boldsymbol{u}_i - \boldsymbol{G}_i^{-1}\boldsymbol{H}_i) + \boldsymbol{y}_i^\top(\Lambda^{-1} - \Lambda^{-1}\boldsymbol{X}_i \boldsymbol{G}_i^{-1}\boldsymbol{X}_i^\top\Lambda^{-1})\boldsymbol{y}_i \right\}
% \end{align*}
Simplifying the exponent in \eqref{full_conditional_u_d}, we have
\begin{multline*}
        p(\boldsymbol{u}, \boldsymbol{d} \mid \boldsymbol{Y}, \Lambda) \propto \prod_{i=1}^{n} d_i^{-\left(\frac{N}{2} + \alpha_i + 1 + \frac{m}{2}\right)} \exp\left(-\frac{1}{d_i}\left(\beta_i + \frac{1}{2}\boldsymbol{y}_i^\top(\Lambda^{-1} - \Lambda^{-1}\boldsymbol{X}_i \boldsymbol{G}_i^{-1}\boldsymbol{X}_i^\top\Lambda^{-1})\boldsymbol{y}_i\right)\right)\\
        \times \exp\left( -\frac{1}{2d_i} (\boldsymbol{u}_i - \boldsymbol{G}_i^{-1}\boldsymbol{H}_i)^\top \boldsymbol{G}_i (\boldsymbol{u}_i - \boldsymbol{G}_i^{-1}\boldsymbol{H}_i)  \right).
\end{multline*}
Therefore,
\begin{equation}
    \label{full_conditiona_NIG}
    p(\boldsymbol{u}, \boldsymbol{d} \mid \boldsymbol{Y}, \Lambda) \equiv \prod_{i=1}^{n} \mathcal{N}_m\left(\boldsymbol{u}_i \mid \boldsymbol{G}_i^{-1} \boldsymbol{H}_i, d_i\boldsymbol{G}_i^{-1}\right)\mathcal{IG}(d_i\mid \tilde{\alpha}_i, \tilde{\beta}_i),
\end{equation}
where 
\begin{align}
\label{eq:forms_of_posterior_parameters}
    \begin{aligned}
        & \tilde{\alpha}_i = \alpha_i + \frac{N}{2}, \\
        & \tilde{\beta}_i  = \beta_i + \frac{1}{2}\boldsymbol{y}_i^\top\left(\Lambda^{-1} - \Lambda^{-1}\boldsymbol{X}_i\boldsymbol{G}_i^{-1} \boldsymbol{X}_i^\top \Lambda^{-1}\right) \boldsymbol{y}_i,\\
        & \boldsymbol{G}_i^{-1} = \left(\boldsymbol{V}_i^{-1} + \boldsymbol{X}_i^\top \Lambda^{-1} \boldsymbol{X}_i\right)^{-1},\\
        & \boldsymbol{H}_i = \boldsymbol{X}_i^\top \Lambda^{-1} \boldsymbol{y}_i.
    \end{aligned}
\end{align}
\noindent The full conditional distribution of $\Lambda$ is given by,
\begin{align*}
    \nonumber p(\Lambda \mid \boldsymbol{Y}, \boldsymbol{u}, \boldsymbol{d}) & \propto p(\boldsymbol{Y}\mid \boldsymbol{u}, \boldsymbol{d}, \Lambda )\, p(\Lambda)\\
    \nonumber & \propto \left\{|\Lambda|^{-\frac{n}{2}} \prod_{i=1}^{n} d_i^{-\frac{N}{2}}\exp\left( -\frac{1}{2d_i}\text{tr}\left((\boldsymbol{y}_i - \boldsymbol{X}_i\boldsymbol{u}_i)(\boldsymbol{y}_i - \boldsymbol{X}_i\boldsymbol{u}_i)^\top \Lambda^{-1}\right)\right) \right\}|\Lambda|^{-\frac{N+\nu + 1}{2}}\exp\left(-\frac{1}{2}\text{tr}(\Psi \Lambda^{-1})\right)\\
    & \propto |\Lambda|^{-\frac{N+ (\nu + n) + 1}{2}} \exp\left(-\frac{1}{2}\text{tr}\left(\left(\Psi + \sum_{i=1}^{n}\frac{(\boldsymbol{y}_i - \boldsymbol{X}_i\boldsymbol{u}_i)(\boldsymbol{y}_i - \boldsymbol{X}_i\boldsymbol{u}_i)^\top}{d_i}\right)\Lambda^{-1}\right)\right).
\end{align*}
Hence, 
\begin{equation}\label{full_condition_lambda}
    p(\Lambda \mid \boldsymbol{Y}, \boldsymbol{u}, \boldsymbol{d})  \equiv \mathcal{IW}\left(\Psi + \sum_{i=1}^{n}\frac{(\boldsymbol{y}_i - \boldsymbol{X}_i\boldsymbol{u}_i)(\boldsymbol{y}_i - \boldsymbol{X}_i\boldsymbol{u}_i)^\top}{d_i}\,,\, n + \nu\right),
\end{equation}
Because \eqref{full_conditiona_NIG} and \eqref{full_condition_lambda} are in closed form, Gibbs sampling is straightforward. 
\subsection{Parameterization and inference on the hyperparameters}\label{subsec:hyperparmeter}
We reparameterize the priors for $\bm{u}_i$ and $d_i$ in \eqref{priors} in terms of a much smaller
number of hyperparameters. Inspired by the behavior of Mat\'ern-type
covariance functions, we introduce a three-dimensional vector of hyperparameters $\boldsymbol{\theta} = (\theta_1, \theta_2, \theta_3)^\top$, where $\theta_1$ is related to the marginal variance, $\theta_2$ is related to the range, and $\theta_3$ is related to the smoothness. 
We first consider the prior for $d_i$ in \eqref{priors}. Considering an exponential covariance kernel with marginal variance $\theta_1$ and range $2/\theta_2$, the $(i,j)$th entry of the spatial covariance matrix, $\Sigma_{ij} = \theta_1\exp(-\theta_2\Vert\boldsymbol{s}_i -\boldsymbol{s}_j\Vert/2)$. Assuming $m = 1$, we have,
\begin{align}
    \nonumber \text{Var}(\boldsymbol{y}_i\mid \boldsymbol{y}_{g_m(i)}, \Lambda) & = d_i \Lambda\\
    \label{decay_functional_form}
     & = \theta_1(1 - \exp(-\theta_2\Vert\boldsymbol{s}_i -\boldsymbol{s}_g\Vert))\Lambda \\
    \label{approx_decay_functional_form}
    & \approx \theta_1(1 - \exp(-\theta_2(i)^{-\frac{1}{p}})\Lambda, 
\end{align}
where $g = g_1(i)$. The functional form in \eqref{decay_functional_form} holds exactly for the exponential covariance kernel with $m = 1$ and holds at least approximately for Mate\'rn covariance kernels in two dimensions with $m = n-1$. The approximation in \eqref{approx_decay_functional_form} follows since under a maximin ordering of the spatial locations, the distance $\Vert\boldsymbol{s}_i -\boldsymbol{s}_g\Vert$ between the location $\bm{s}_i$ and its nearest previously ordered neighbor decreases roughly as $(i)^{-1/p}$ for a regular grid on a unit hypercube $[0,1]^p$. The prior mean and variance of $d_i\Lambda$ conditional on $\Lambda$ is given by,
\begin{align}
\label{prior_mean_d}
    \mathds{E}(d_i\Lambda) & = \frac{\beta_i}{\alpha_i - 1}\Lambda,\\
    \label{prior_variance_d}
    \text{Var}(d_i\Lambda) & = \frac{\beta_i^2}{(\alpha_i - 1)^2 (\alpha_i - 2)}\Lambda^2.
\end{align}  
With the motivation of getting a prior for $d_i$ that shrinks toward \eqref{approx_decay_functional_form},
we set the prior mean \eqref{prior_mean_d} to be equal to the functional form in \eqref{approx_decay_functional_form}. Thus, for $i =1, \dots, n$,
\begin{align}
    \nonumber \frac{\beta_i}{\alpha_i - 1}\Lambda & = \theta_1(1 - \exp(-\theta_2(i)^{-\frac{1}{p}})\Lambda \\
    \label{eq:alpha_beta1}
    \Rightarrow \frac{\beta_i}{\alpha_i - 1} & = \theta_1 f_{\theta_2}(i),
\end{align}
where $f_{\theta_2}(i) = (1 - \exp(-\theta_2(i)^{-\frac{1}{p}})$. As the empirically observed variance of $d_i$ decreases with the index $i$ as well, following this reasoning, we set the prior standard deviation of $d_i \Lambda$ (obtained from \eqref{prior_variance_d}) to be half the prior mean in \eqref{prior_mean_d}. Therefore, for $i=1,\dots, n$, 
\begin{align}
    \nonumber \frac{\beta_i}{(\alpha_i - 1)\sqrt{\alpha_i -2}}\Lambda & = \frac{\beta_i}{2(\alpha_i - 1)}\Lambda\\
    \label{eq:alpha_beta2}\Rightarrow \frac{\beta_i}{(\alpha_i - 1)\sqrt{\alpha_i -2}} & = \frac{\beta_i}{2(\alpha_i - 1)}.
\end{align}
Solving for $\alpha_i$ and $\beta_i$ from the equations \eqref{eq:alpha_beta1} and \eqref{eq:alpha_beta2} yields,  for $i = 1, \dots, n$,
\begin{equation}\label{prior_parameters_form1}
\begin{aligned}
    \alpha_i & = 6, \\
    \beta_i & = 5 \theta_1 f_{\theta_2}(i).
    \end{aligned}
\end{equation}
Recent results based on elliptic boundary-value problems \citep{schafer_approx} imply that the Cholesky entry $(\boldsymbol{u}_i)_j$, corresponding to the $j$th nearest neighbor, decays exponentially as a function of $j$ for Mate\'rn covariance functions whose spectral densities are the reciprocal of a polynomial (ignoring edge effects). With the motivation to capture this exponential decay for the Cholesky entries, we arrive at the same functional form for the entries $v_{ij}$ of the diagonal matrix $\boldsymbol{V}_i$ in \eqref{priors}.
Specifically, we have
\begin{equation}\label{prior_parameters_form2}
\bm{V}_i = \text{Diag}(v_{i1}, \dots, v_{im}),\hspace{0.2cm} \text{where $v_{ij} = \exp\left(-\frac{\theta_3\,j}{f_{\theta_2}(i)}\right),\ \  j = 1, \dots, m$}, \  i = 1, \dots, n.
\end{equation}
We note that all components of $\boldsymbol{\theta}$ are assumed to be positive and so we perform all inference on the logarithmic scale. We next discuss how to infer the hyperparameter $\boldsymbol{\theta}$ based on the data $\bm{Y}$. The key component for learning about the hyperparameter $\boldsymbol{\theta} = (\theta_1, \theta_2, \theta_3)^\top$ is the marginal or integrated likelihood $p(\boldsymbol{Y}\mid \Lambda, \boldsymbol{\theta})$, 
\begin{align}
    p(\boldsymbol{Y}\mid \Lambda, \boldsymbol{\theta}) 
    & = \prod_{i = 1}^{n} \int_{d_i} \int_{\boldsymbol{u}_i} \mathcal{N}_N(\boldsymbol{y}_i \mid \boldsymbol{X}_i\boldsymbol{u}_i, d_i \Lambda)\, \mathcal{N}_m(\boldsymbol{u}_i\mid \boldsymbol{0}, d_i\boldsymbol{V}_i)\, \mathcal{IG}(d_i \mid \alpha_i, \beta_i) \, \text{d}d_i\text{d}\boldsymbol{u}_i.
\end{align}
Standard calculations on multivariate normal distribution yield,
\begin{equation*}
    p(\boldsymbol{Y}\mid \Lambda, \boldsymbol{\theta}) \propto \prod_{i=1}^{n} \frac{|\boldsymbol{G}_i^{-1}|^{\frac{1}{2}}}{|\boldsymbol{V}_i|^{\frac{1}{2}}} \frac{\beta_i^{\alpha_i}}{\tilde{\beta}_i^{\tilde{\alpha}_i}} \frac{\Gamma(\tilde{\alpha}_i)}{\Gamma(\alpha_i)},
\end{equation*}
where the prior parameters $\alpha_i, \beta_i, \bm{V}_i$ are given in \eqref{priors}, and the parameters $\tilde{\alpha}_i, \tilde{\beta}_i, \boldsymbol{G}_i^{-1}$ are given in \eqref{eq:forms_of_posterior_parameters}.
For a  fully Bayesian inference, we assume a flat prior for $\boldsymbol{\theta}$, and the marginal posterior distribution of $\boldsymbol{\theta}$ is given by,
\begin{align}
    \nonumber p(\boldsymbol{\theta}\mid \Lambda, \boldsymbol{Y}) & \propto p(\boldsymbol{Y}\mid \Lambda, \boldsymbol{\theta})\\
    \label{full_conditiona_theta}
    & \propto \prod_{i=1}^{n} \frac{|\boldsymbol{G}_i^{-1}|^{\frac{1}{2}}}{|\boldsymbol{V}_i|^{\frac{1}{2}}} \frac{\beta_i^{\alpha_i}}{\tilde{\beta}_i^{\tilde{\alpha}_i}} \frac{\Gamma(\tilde{\alpha}_i)}{\Gamma(\alpha_i)}. 
\end{align}
\subsection{Blocked Gibbs sampling algorithm}
From the full conditional distributions  \eqref{full_conditiona_NIG}, \eqref{full_condition_lambda}, and the marginal posterior distribution \eqref{full_conditiona_theta}, the blocked Gibbs sampling is straightforward. The algorithm iterates by sampling from the conditional distributions,
\begin{align}
    \begin{aligned}
        p(\boldsymbol{\theta}\mid \Lambda, \boldsymbol{Y}) & \propto \prod_{i=1}^{n} \frac{|\boldsymbol{G}_i^{-1}|^{\frac{1}{2}}}{|\boldsymbol{V}_i|^{\frac{1}{2}}} \frac{\beta_i^{\alpha_i}}{\tilde{\beta}_i^{\tilde{\alpha}_i}} \frac{\Gamma(\tilde{\alpha}_i)}{\Gamma(\alpha_i)}, \\
        p(\boldsymbol{u}, \boldsymbol{d} \mid \boldsymbol{Y}, \Lambda, \boldsymbol{\theta}) & \propto \prod_{i=1}^{n} \mathcal{N}_m\left(\boldsymbol{u}_i \mid \boldsymbol{G}_i^{-1} \boldsymbol{H}_i, d_i\boldsymbol{G}_i^{-1}\right)\mathcal{IG}(d_i\mid \tilde{\alpha}_i, \tilde{\beta}_i), \\
        p(\Lambda \mid \boldsymbol{Y}, \boldsymbol{u}, \boldsymbol{d}, \boldsymbol{\theta})  & \propto \mathcal{IW}\left(\Psi + \sum_{i=1}^{n}\frac{(\boldsymbol{y}_i - \boldsymbol{X}_i\boldsymbol{u}_i)(\boldsymbol{y}_i - \boldsymbol{X}_i\boldsymbol{u}_i)^\top}{d_i}\,,\, n + \nu\right),
    \end{aligned}
\end{align}
where the parameters $\tilde{\alpha}_i, \tilde{\beta}_i, \bm{H}_i$ and $\bm{G}_i^{-1}$ are given by \eqref{eq:forms_of_posterior_parameters}. The functional dependencies of the prior parameters $\alpha_i, \beta_i,\bm{V}_i$ on the hyperparameter $\bm{\theta}$ are specified by \eqref{prior_parameters_form1} and \eqref{prior_parameters_form2}.

\section{Covariance estimation for matrix-normal distributions with multiple independent samples}\label{sec:methology_multiple_sample}
In many cases, we have independent samples of spatial transcriptomic data measured on the same set of genes. For example, the experiment may collect spatially resolved single-cell gene expression data for a set of genes of interest from a number of experimental units (e.g, different tissue samples). The data from the $r$th sample $\boldsymbol{Y}_r$ is an $ N \times n_r$ matrix, where $n_r$ denotes the number of single cells observed for the $r$th sample and $N$ denotes the number of genes. The spatial locations of the single cells $\boldsymbol{s}_{r1}, \dots, \boldsymbol{s}_{r,n_r}$ may not align for different samples $r,\,  r= 1, \dots, R$. We consider the same maximin ordering of the spatial locations corresponding to each sample $\boldsymbol{Y}_r$. Then each $\boldsymbol{Y}_r$ is modelled independently as a centered matrix-normal distribution with a shared row covariance matrix but a sample-specific column covariance matrix,
\[
\boldsymbol{Y}_r \overset{ind}{\sim}\mathcal{MN}_{N, n_r} (0, \Lambda, \Sigma_r), \ \ r = 1, \dots, R.
\]
Similarly as before, we take the modified Cholesky decomposition of the column precision matrix for each sample,
\begin{equation}\label{eq:cholesky_each_sample}
    \Sigma_r^{-1} = \text{U}_r\text{D}_r^{-1}\text{U}_r^\top.
\end{equation}
Letting $\boldsymbol{Y} = \{\boldsymbol{Y}_1, \dots, \boldsymbol{Y}_R\}$ denote the collection of all samples, we have a similar representation of the joint distribution of $\bm{Y}$ in terms of series of linear regression models as \eqref{regression}, 
\begin{align}
\label{regression_replicates}
    \begin{aligned}
     p(\boldsymbol{Y}\mid \Lambda , \{\Sigma_1,\dots, \Sigma_R\}) & = \prod_{r=1}^{R}\prod_{i=1}^{n_r} p(\boldsymbol{y}_{ri}\mid \boldsymbol{y}_{rg_{r,m}(i)}, \Lambda, \Sigma_r)\\
     & = \prod_{r=1}^{R}\prod_{i=1}^{n_r} \mathcal{N}_N(\boldsymbol{y}_{ri}\mid  \boldsymbol{X}_{ri}\boldsymbol{u}_{ri}, d_{ri}\Lambda), \\
    \end{aligned}
\end{align}
where the ``design matrix" $\boldsymbol{X}_{ri}$ of the $r$th sample consists of the observations at the $m$ neighboring locations of $\boldsymbol{s}_{ri}$, stored in the columns of $\boldsymbol{Y}_r$ with indices $g_{r,m(i)}$.
Similarly, $\boldsymbol{u}_{ri} = \text{U}_{r, g_{r,m(i)}}$ is the nonzero off-diagonal entries in the $i$th column of $\text{U}_r$, and $d_{ri}$ is the $i$th diagonal element of the diagonal matrix $\text{D}_r$ in \eqref{eq:cholesky_each_sample}.  
We assume independent priors that are conjugate to model \eqref{regression_replicates}. For $i=1, \dots, n_r,\ r=1, \dots, R,$
\begin{alignat}{4}
\label{priors_replicates}
\begin{aligned}
     &\boldsymbol{u}_{ri}\mid d_{ri}&\overset{ind}{\sim} &\ \ \mathcal{N}_m(\boldsymbol{0}, d_{ri} \boldsymbol{V}_{ri}),\\
     & d_{ri} &\overset{ind}{\sim} &\ \ \mathcal{IG}(\alpha_{ri}, \beta_{ri})\\
     &\Lambda  &\overset{ind}{\sim} &\ \ \mathcal{IW}(\nu, \Psi).
    \end{aligned}
\end{alignat}
Similarly to Section \ref{subsec:hyperparmeter}, we reparameterize the priors for $\bm{u}_{ri}$ and $d_{ri}$ in terms of a shared vector of hyperparameters $\boldsymbol{\theta} = (\theta_1, \theta_2, \theta_3)^\top$.  The blocked Gibbs sampling algorithm for the multi-sample case follows as an immediate extension to the corresponding single-sample algorithm. Letting $\boldsymbol{u} = \{\boldsymbol{u}_{ri}, i = 1, \dots, n_r, \ r = 1, \dots, R\}$ and $\boldsymbol{d} = \{d_{ri}, i = 1, \dots, n_r, \ r = 1, \dots, R\}$, the algorithm iterates by sampling,
\begin{align}
    \begin{aligned}
        p(\boldsymbol{\theta}\mid \Lambda, \boldsymbol{Y}) & \propto \prod_{r=1}^{R} \prod_{i=1}^{n_r} \frac{|\boldsymbol{G}_{ri}^{-1}|^{\frac{1}{2}}}{|\boldsymbol{V}_{ri}|^{\frac{1}{2}}} \frac{\beta_{ri}^{\alpha_{ri}}}{\tilde{\beta}_{ri}^{\tilde{\alpha}_{ri}}} \frac{\Gamma(\tilde{\alpha}_{ri})}{\Gamma(\alpha_{ri})}, \\
        p(\boldsymbol{u}, \boldsymbol{d} \mid \boldsymbol{Y}, \Lambda, \boldsymbol{\theta}) & \propto \prod_{r=1}^{R} \prod_{i=1}^{n_r}  \mathcal{N}_m\left(\boldsymbol{u}_{ri} \mid \boldsymbol{G}_{ri}^{-1} \boldsymbol{H}_{ri}, d_{ri}\boldsymbol{G}_{ri}^{-1}\right)\mathcal{IG}(d_{ri}\mid \tilde{\alpha}_{ri}, \tilde{\beta}_{ri}), \\
        p(\Lambda \mid \boldsymbol{Y}, \boldsymbol{u}, \boldsymbol{d}, \boldsymbol{\theta})  & \propto \mathcal{IW}\left(\Psi + \sum_{r=1}^{R}\sum_{i=1}^{n_r}\frac{(\boldsymbol{y}_{ri} - \boldsymbol{X}_{ri}\boldsymbol{u}_{ri})(\boldsymbol{y}_{ri} - \boldsymbol{X}_{ri}\boldsymbol{u}_{ri})^\top}{d_{ri}}\,,\, \sum_{r=1}^{R} n_r + \nu\right),
    \end{aligned}
\end{align}
where, for $i = 1, \dots, n_r$, $r=1, \dots, R$,
\begin{align}
    \begin{aligned}
        \alpha_{ri} & = 6,\\
        \tilde{\alpha}_{ri} & = \alpha_{ri} + \frac{N}{2},\\
        \beta_{ri} & = 5\theta_1(1 - \exp(-\theta_2(i)^{-\frac{1}{p}}),\\
        \tilde{\beta}_{ri} & = \beta_{ri} + \frac{1}{2}\boldsymbol{y}_{ri}^\top\left(\Lambda^{-1} - \Lambda^{-1}\boldsymbol{X}_{ri}\boldsymbol{G}_{ri}^{-1} \boldsymbol{X}_{ri}^\top \Lambda^{-1}\right) \boldsymbol{y}_{ri}, \\
        \boldsymbol{G}_{ri}^{-1} & = \left(\boldsymbol{V}_{ri}^{-1} + \boldsymbol{X}_{ri}^\top \Lambda^{-1} \boldsymbol{X}_{ri}\right)^{-1},\\
        \boldsymbol{H}_{ri} & = \boldsymbol{X}_{ri}^\top \Lambda^{-1} \boldsymbol{y}_{ri},\\
        \boldsymbol{V}_{ri} & = \text{Diag}(v_{ri1}, \dots, v_{rim}),\\
        v_{rij} & = \exp\left(-\frac{\theta_3\,j}{(1 - \exp(-\theta_2(i)^{-\frac{1}{p}})}\right),\ \  j = 1, \dots, m.\\
    \end{aligned}
\end{align}
\section{Simulations}\label{sec:simulations}
In this section we provide simulations to demonstrate the performance of the proposed method in the estimation of row and column covariance matrices for a matrix-normal distribution. We compare our method with the existing Bayesian nonparametric method of spatial covariance estimation for multivariate data \citep{brian}. We consider the case of a single sample of matrix-variate data in Section \ref{sec:simulations_single_sample} and the multi-sample case in Section \ref{sec:simulations_multiple_sample}. 
\subsection{Single-sample case}\label{sec:simulations_single_sample}
We considered $n = 100, 200, 500$ spatial locations and $N =20, 30$ genes. The outline of the data generation and simulation strategies are as follows:
\begin{itemize}
    \item We drew $n$ spatial locations from $\mathcal{U}(0, 1)\times \mathcal{U}(0, 1)$, where $\mathcal{U}(0, 1)$ denotes a uniform distribution on the interval $(0,1)$.
    \item The true column covariance $\Sigma$ was generated from a Mat\'ern covariance kernel with smoothness parameter equal to $0.25$, marginal variance equal to $1$, and varied the range parameter ($\phi = 1, 2$), using the $n$ random spatial locations generated.
    \item The true row covariance $\Lambda$ was generated from $\mathcal{IW}(N, \Psi)$ where the scale matrix $\Psi$ was considered to be one of the following choices:
        \begin{table}[H]
        \centering
            \begin{tabular}{ |c|c|c| } 
                    \hline
                      AR-correlation & Equi-correlation & Banded-correlation \\ 
                     \hline
                     $\Psi_{ij}=\rho^{|i - j|}$ & $\Psi_{ij}=\rho^{\mathds{1}(i\neq j)}$ & $\Psi_{ij}=\rho^{1 - \mathds{1}(i=j)}\mathds{1}(|i - j| \in \{0, 1\})$\\ 
                     \hline
            \end{tabular}
            \caption{Choices of scale matrix $\Psi$ for generating the true $\Lambda$. For all cases, $\rho = 0.5$.}
    \label{tab:scale_matrix_choices}
        \end{table}
        \item We fixed $m = 10$ for all our simulations.
        \item We used an \emph{adaptive-MCMC} algorithm for sampling the hyperparameters $\boldsymbol{\theta}= (\theta_1, \theta_2, \theta_3)^\top$, using the publicly available R package \texttt{adaptMCMC}. We initialized $(\theta_1, \theta_2, \theta_3) = (1, -1, 0)$. 
 The initial scale (shape) matrix required for the proposal distribution of the \emph{adaptive-MCMC} algorithm was taken as 
$
 \begin{bmatrix}
\phantom{-}0.05 &           -0.04 &    0 \\
          -0.04 & \phantom{-}0.05 &    0 \\
             0  &               0 & 0.01
 \end{bmatrix}
 $
        \item We ran $2,000$ iterations of our sampler and discarded the first $1,000$ samples as burn-in. 
        \item To monitor convergence of our MCMC, we plotted the traceplots of log likelihood.
        \item Since for the matrix-normal distribution, the covariance matrices are non-identifiable, we consider the posterior correlation matrices and carry out comparisons based on the correlation matrix rather than the covariance matrix. We still denote by $\Sigma$, $\Lambda$, etc. the correlation matrices rather than the covariance matrix.
        \item We performed our simulations for 30 independent replicates for any given combination of simulation parameters.
\end{itemize}
We compared $\text{KL}_{\mathcal{N}}$ and $\text{KL}_{\mathcal{MN}}$, which denote the KL divergence (in log scale) by assuming that $\boldsymbol{Y} \sim \mathcal{MN}_{N, n}(0, \mathds{I}, \Sigma )$ and $\boldsymbol{Y} \sim \mathcal{MN}_{N, n}(0, \Lambda, \Sigma )$, respectively. Note that $\mathcal{MN}_{N, n}(0, \mathds{I}, \Sigma )$ corresponds to $N$ independent realizations of a multivariate normal $\mathcal{N}_n(0, \Sigma)$. We compared both these quantities from our posterior estimates of $\Sigma$ and/or $\Lambda$, denoting them with the subscript ``P" and the MAP estimate of $\Sigma$ obtained from the method by \citealp{brian}, denoting them with subscript ``M". Hereafter, we refer to the method by \citealp{brian} as the MAP method. We define the relative Frobenius error of an estimator $\hat{A}$ of the matrix $A$ by,
\[
    RE_{\hat{A}} = \frac{\Vert \hat{A} - A \Vert_F}{\Vert A \Vert_F},
\]
where $F$ denotes the Frobenius norm of a matrix. We denote by $RE_{\Sigma_{P}}$, $RE_{\Lambda_{P}}$, and $RE_{\Sigma_{M}}$ the relative Frobenius error of the correlation matrix $\Sigma$ and $\Lambda$, with the subscript denoting the method used. Note that $\Lambda$ is not estimated under MAP method. Table \ref{tab:KL_comparisons_replicates} summarizes the results of our simulations wherein we report the mean and standard deviation over the replicates for the different comparison metrics. It is clear that in situations when the rows of a matrix-variate data (genes) are correlated, the KL divergences and relative Frobenius errors are lower for our proposed method than that under the MAP method. We also note that as the number of spatial locations increases, the accuracy of estimation of the row correlations increases as can be seen from the corresponding decreasing relative Frobenius error.

\begin{sidewaystable}[htp]
   % \begin{table}[htp]
        \centering
            \begin{tabular}{ |c|c|c|c|c|c|c|c|c|c|c|c } 
                    \hline
                    $\phi$ & N & $n$ & Correlation &$\text{KL}_{\mathcal{N}\,(\text{P})}$  & $\text{KL}_{\mathcal{N}\,(\text{M})}$  & $\text{KL}_{\mathcal{MN}\, (\text{P})}$ & $\text{KL}_{\mathcal{MN}\, (\text{M})}$ & $RE_{\Sigma_{\text{P}}}$ & $RE_{\Lambda_{\text{P}}}$ & $RE_{\Sigma_{\text{M}}}$\\ 
                     \hline
                   \multirow{18}*{1 } & \multirow{9}*{20}  &  100 & AR & 4.365 (0.184)& 7.196 (0.362) & 7.575 (0.178) & 10.246 (0.348) & 0.65 (0.099) & 0.241 (0.061) & 1.102 (0.156)\\
                    & & 200 & AR & 5.27 (0.117) & 8.158 (0.321) & 8.401 (0.115) & 11.195 (0.309) & 0.624 (0.072) & 0.169 (0.036) & 1.093 (0.169)\\
                    & & 500 & AR & 6.596 (0.122) & 9.651 (0.308) & 9.652 (0.132) & 12.67 (0.301) & 0.663 (0.089) & 0.112 (0.022) & 1.148 (0.191)\\
                    \cline{3-11}
                    & & 100 & Equi & 4.264 (0.142) & 6.988 (0.339) & 7.444 (0.194) & 10.049 (0.318) & 0.656 (0.093) & 0.228 (0.047) & 1.043 (0.146)\\
                    & & 200 & Equi & 5.319 (0.144) & 8.095 (0.272) & 8.428 (0.16) & 11.134 (0.261) & 0.613 (0.07) & 0.156 (0.032) & 1.015 (0.136)\\
                    & & 500 & Equi & 6.645 (0.137) & 9.453 (0.386) & 9.712 (0.151) & 12.478 (0.376) & 0.615 (0.089) & 0.114 (0.023) & 1.009 (0.145)\\
                    \cline{3-11}
                    & & 100 & Banded & 4.356 (0.215) & 7.735 (0.349) & 7.628 (0.223) & 10.77 (0.337) & 0.633 (0.11) & 0.24 (0.064) & 1.104 (0.179)\\
                    & & 200 & Banded & 5.276 (0.176) & 9.152 (0.573) & 8.441 (0.187) & 12.168 (0.565) & 0.659 (0.108) & 0.168 (0.037) & 1.145 (0.138)\\
                    & & 500 & Banded & 6.592 (0.142) & 10.567 (0.322) & 9.673 (0.169) & 13.574 (0.319) & 0.65 (0.101) & 0.111 (0.029) & 1.176 (0.164)\\
                     \cline{2-11}
                      & \multirow{9}*{30}  &  100 & AR & 4.253 (0.107) & 9.357 (0.379) & 8.019 (0.171) & 12.773 (0.374) & 0.642 (0.101) & 0.114 (0.036) & 1.471 (0.194)\\
                    & & 200 & AR & 5.284 (0.151) & 10.439 (0.346) & 8.892 (0.163) & 13.85 (0.343) & 0.596 (0.104) & 0.081 (0.023) & 1.452 (0.186)\\
                    & & 500 & AR & 6.615 (0.126) & 11.713 (0.377) & 10.104 (0.134) & 15.121 (0.374) & 0.588 (0.071) & 0.052 (0.015) & 1.544 (0.209)\\
                    \cline{3-11}
                    & & 100 & Equi & 4.207 (0.167) & 9.299 (0.293) & 7.952 (0.215) & 12.715 (0.289) & 0.647 (0.09) & 0.131 (0.032) & 1.387 (0.15)\\
                    & & 200 & Equi & 5.288 (0.157) & 10.256 (0.418) & 8.877 (0.171) & 13.67 (0.413) & 0.616 (0.085) & 0.093 (0.029) & 1.428 (0.17)\\
                    & & 500 & Equi & 6.623 (0.137) & 11.502 (0.416) & 10.096 (0.137) & 14.912 (0.413) & 0.583 (0.067) & 0.058 (0.016) & 1.485 (0.144)\\
                    \cline{3-11}
                    & & 100 & Banded & 4.339 (0.225) & 9.426 (0.278) & 8.167 (0.228) & 12.842 (0.274) & 0.636 (0.108) & 0.119 (0.04) & 1.494 (0.221)\\
                    & & 200 & Banded & 5.264 (0.147) & 10.851 (0.429) & 8.929 (0.202) & 14.259 (0.426) & 0.612 (0.091) & 0.084 (0.022) & 1.525 (0.176)\\
                    & & 500 & Banded & 6.544 (0.08) & 12.21 (0.443) & 10.09 (0.141) & 15.616 (0.442) & 0.615 (0.062) & 0.054 (0.016) & 1.585 (0.168)\\
                    \hline
                    \multirow{18}*{2 } & \multirow{9}*{20}  &  100 & AR & 4.316 (0.245) & 7.131 (0.449) & 7.53 (0.242) & 10.186 (0.427) & 0.523 (0.13) & 0.264 (0.064) & 0.65 (0.12)\\
                    & & 200 & AR & 5.256 (0.226) & 8.167 (0.44) & 8.395 (0.232) & 11.205 (0.423) & 0.517 (0.123) & 0.193 (0.058) & 0.692 (0.151)\\
                    & & 500 & AR & 6.592 (0.128) & 9.463 (0.45) & 9.654 (0.141) & 12.488 (0.439) & 0.49 (0.099) & 0.126 (0.043) & 0.698 (0.142)\\
                    \cline{3-11}
                    & & 100 & Equi & 4.237 (0.207) & 7.067 (0.377) & 7.425 (0.235) & 10.124 (0.357) & 0.53 (0.112) & 0.263 (0.083) & 0.676 (0.158)\\
                    & & 200 & Equi & 5.257 (0.179) & 8.171 (0.465) & 8.383 (0.173) & 11.209 (0.449) & 0.511 (0.113) & 0.203 (0.069) & 0.617 (0.114)\\
                    & & 500 & Equi & 6.622 (0.183) & 9.458 (0.364) & 9.675 (0.182) & 12.482 (0.356) & 0.485 (0.118) & 0.116 (0.03) & 0.667 (0.139)\\
                    \cline{3-11}
                    & & 100 & Banded & 4.347 (0.231) & 7.638 (0.424) & 7.624 (0.216) & 10.678 (0.405) & 0.492 (0.121) & 0.261 (0.066) & 0.698 (0.141)\\
                    & & 200 & Banded & 5.311 (0.225) & 9.027 (0.51) & 8.458 (0.191) & 12.046 (0.501) & 0.514 (0.131) & 0.174 (0.042) & 0.683 (0.131)\\
                    & & 500 & Banded & 6.559 (0.154) & 10.615 (0.657) & 9.609 (0.175) & 13.623 (0.652) & 0.531 (0.107) & 0.114 (0.028) & 0.829 (0.191)\\
                    \cline{2-11}
                      & \multirow{9}*{30}  &  100 & AR & 4.311 (0.249) & 9.295 (0.468) & 8.042 (0.284) & 12.713 (0.462) & 0.472 (0.14) & 0.138 (0.038) & 1.048 (0.226)\\
                    & & 200 & AR & 5.329 (0.208) & 10.362 (0.485) & 8.91 (0.212) & 13.775 (0.481) & 0.431 (0.098) & 0.097 (0.021) & 0.999 (0.216)\\
                    & & 500 & AR & 6.651 (0.13) & 11.521 (0.44) & 10.122 (0.136) & 14.931 (0.437) & 0.406 (0.074) & 0.053 (0.017) & 1.096 (0.208)\\
                    \cline{3-11}
                    & & 100 & Equi & 4.21 (0.19) & 9.362 (0.575) & 7.922 (0.243) & 12.779 (0.568) & 0.514 (0.117) & 0.137 (0.052) & 0.989 (0.254)\\
                    & & 200 & Equi & 5.254 (0.199) & 10.099 (0.351) & 8.831 (0.21) & 13.514 (0.347) & 0.467 (0.106) & 0.097 (0.043) & 1.006 (0.234)\\
                    & & 500 & Equi & 6.622 (0.151) & 11.749 (0.515) & 10.121 (0.172) & 15.157 (0.512) & 0.424 (0.103) & 0.061 (0.022) & 0.916 (0.199)\\ 
                    \cline{3-11}
                    & & 100 & Banded & 4.419 (0.305) & 9.626 (0.511) & 8.222 (0.285) & 13.04 (0.506) & 0.443 (0.125) & 0.134 (0.04) & 1.011 (0.266)\\
                    & & 200 & Banded & 5.262 (0.216) & 10.64 (0.412) & 8.921 (0.196) & 14.05 (0.409) & 0.49 (0.125) & 0.093 (0.024) & 1.018 (0.225)\\
                    & & 500 & Banded & 6.595 (0.163) & 12.24 (0.489) & 10.081 (0.181) & 15.646 (0.487) & 0.447 (0.104) & 0.052 (0.012) & 1.082 (0.218)\\
                    \hline
            \end{tabular}
            \caption{KL divergences (in log scale) and Relative Frobenius error for the two methods, with varying number of genes, number of spatial locations, and range of Mate\'rn kernel using estimated correlation matrices for the different choices of the scale matrix for the true row correlation (lower values indicate better fit). We report the mean (s.d) over $30$ independent replicates. }
            \label{tab:KL_comparisons_replicates}
        % \end{table}    
\end{sidewaystable}  

\subsection{Multi-sample case}\label{sec:simulations_multiple_sample}
We further conducted our simulations for multiple independent samples. In particular we considered $3$ independent samples, i.e. $R = 3$ of spatial data on the same set of genes (N) over possibly different spatial locations. For simplicity, we considered that the three samples have the same number of spatial locations i.e., $n_1 = n_2 = n_3 =n$. As before, we considered $n = 100, 200, 500$, and $N =20, 30$. The outline of the data generation and
simulation strategies are as follows:
\begin{itemize}
    \item We drew $n_r,\  r = 1, 2, 3$ spatial locations independently from $\mathcal{U}(0, 1)\times \mathcal{U}(0, 1)$.
    \item The true column covariance $\Sigma_1$ for the sample 1 was generated from a Mat\'ern covariance kernel with smoothness parameter equal to $0.5$, marginal variance equal to $1$, and range parameter equal to $1$, using the $n_1$ random spatial locations generated.
    \item The true column covariance $\Sigma_2$ for the sample 2  was generated from a Mat\'ern covariance kernel with smoothness parameter equal to $0.5$, marginal variance equal to $1.5$, and range parameter equal to $1$, using the $n_2$ random spatial locations generated.
    \item The true column covariance $\Sigma_3$ for the sample 3 was generated from a Mat\'ern covariance kernel with smoothness parameter equal to $0.5$, marginal variance equal to $2$, and range parameter equal to $1$, using the $n_3$ random spatial locations generated.
    \item The true row covariance was generated from $\mathcal{IW}(N, \Psi)$, where the scale matrix $\Psi$ was considered to be one of the choices given in Table \ref{tab:scale_matrix_choices}.
    \item We performed our simulations for 30 independent replicates for any given combination of simulation parameters.
    \item All other data generation strategies, sampling parameters, and convergence criterion were similar to the single-sample case.
\end{itemize}
We calculated the relative Frobenius errors of the three spatial column correlation matrices and the row correlation matrix. From Table \ref{tab:RFN_comparisons_replicates}, it can be seen that the relative Frobenius error from our posterior estimates of spatial correlation matrices are smaller than that from the MAP estimates. Moreover, the relative Frobenius error from our posterior estimates decreases as $N$ increases under any row correlation structure, whereas it shows an increasing trend for the MAP estimates. Besides, as the number of spatial locations increases, the relative Frobenius error  of the estimated row correlations decreases for the proposed method. 

\begin{sidewaystable}[htp]
   % \begin{table}[htp]
        \centering
            \begin{tabular}{ |c|c|c|c|c|c|c|c|c|c|c } 
                    \hline
                     $N$ & $n_r$ & Correlation & $RE_{\Sigma_{1(\text{P})}}$  & $RE_{\Sigma_{2(\text{P})}}$   & $RE_{\Sigma_{3(\text{P})}}$  & $RE_{\Lambda_{\text{P}}}$ & $RE_{\Sigma_{1(\text{M})}}$ & $RE_{\Sigma_{2(\text{M})}}$ & $RE_{\Sigma_{3(\text{M})}}$\\ 
                     \hline
                     \multirow{9}*{20}  &  100 & AR & 0.331 (0.023) & 0.335 (0.04) & 0.331 (0.026) & 0.134 (0.031) & 0.574 (0.129) & 0.594 (0.15) & 0.545 (0.127)\\
                     & 200 & AR & 0.338 (0.034) & 0.347 (0.028) & 0.37 (0.051) & 0.085 (0.014) & 0.616 (0.148) & 0.602 (0.144) & 0.597 (0.135)\\
                     & 500 & AR & 0.354 (0.046) & 0.347 (0.046) & 0.352 (0.029) & 0.054 (0.009) & 0.601 (0.128) & 0.637 (0.203) & 0.62 (0.163)\\
                    \cline{2-10}
                     & 100 & Equi & 0.343 (0.042) & 0.33 (0.038) & 0.341 (0.032) & 0.114 (0.024) & 0.564 (0.136) & 0.571 (0.128) & 0.584 (0.139)\\
                     & 200 & Equi & 0.337 (0.034) & 0.356 (0.03) & 0.355 (0.027) & 0.086 (0.016) & 0.592 (0.132) & 0.593 (0.158) & 0.559 (0.104)\\
                     & 500 & Equi & 0.348 (0.041) & 0.341 (0.028) & 0.357 (0.043) & 0.055 (0.011) & 0.573 (0.129) & 0.599 (0.176) & 0.606 (0.158)\\
                    \cline{2-10}
                     & 100 & Banded & 0.337 (0.033) & 0.328 (0.021) & 0.331 (0.026) & 0.128 (0.019) & 0.6 (0.144) & 0.582 (0.164) & 0.604 (0.162)\\
                     & 200 & Banded & 0.324 (0.021) & 0.345 (0.013) & 0.352 (0.023) & 0.092 (0.018) & 0.602 (0.17) & 0.643 (0.152) & 0.619 (0.195)\\
                     & 500 & Banded & 0.345 (0.031) & 0.344 (0.035) & 0.347 (0.025) & 0.059 (0.013) & 0.579 (0.107) & 0.621 (0.178) & 0.559 (0.113)\\
                     \cline{1-10}
                     \multirow{9}*{30}  &  100 & AR & 0.328 (0.048) & 0.323 (0.029) & 0.325 (0.027) & 0.064 (0.015) & 0.94 (0.263) & 0.932 (0.262) & 0.886 (0.253)\\
                     & 200 & AR & 0.33 (0.029) & 0.342 (0.017) & 0.351 (0.033) & 0.047 (0.014) & 0.863 (0.284) & 0.926 (0.285) & 0.945 (0.255)\\
                     & 500 & AR & 0.342 (0.029) & 0.33 (0.017) & 0.345 (0.024) & 0.029 (0.006) & 0.892 (0.267) & 0.928 (0.255) & 0.95 (0.242)\\
                    \cline{2-10}
                     & 100 & Equi & 0.319 (0.024) & 0.316 (0.022) & 0.325 (0.021) & 0.068 (0.018) & 0.82 (0.226) & 0.869 (0.241) & 0.922 (0.242)\\
                     & 200 & Equi & 0.324 (0.023) & 0.341 (0.02) & 0.346 (0.021) & 0.043 (0.008) & 0.86 (0.258) & 0.905 (0.234) & 0.842 (0.256)\\
                     & 500 & Equi & 0.339 (0.018) & 0.338 (0.02) & 0.344 (0.014) & 0.03 (0.007) & 0.829 (0.249) & 0.817 (0.231) & 0.826 (0.243)\\
                    \cline{2-10}
                     & 100 & Banded & 0.326 (0.021) & 0.32 (0.018) & 0.33 (0.023) & 0.067 (0.018) & 0.796 (0.244) & 0.808 (0.245) & 0.887 (0.28)\\
                     & 200 & Banded & 0.323 (0.02) & 0.338 (0.018) & 0.345 (0.014) & 0.044 (0.012) & 0.768 (0.211) & 0.854 (0.207) & 0.925 (0.247)\\
                     & 500 & Banded & 0.335 (0.029) & 0.333 (0.029) & 0.345 (0.03) & 0.027 (0.008) & 0.827 (0.224) & 0.88 (0.254) & 0.904 (0.237)\\
                    \hline
            \end{tabular}
            \caption{Relative Frobenius error for each spatial correlation matrix and the row correlation matrix norm for the two methods. Results are presented with varying number of genes, number of spatial locations, and different choices of the scale matrix for the true row correlation (lower values indicate better fit). Note that under MAP, the row correlation is not estimated. We report the mean (s.d) over $30$ independent replicates. }
            \label{tab:RFN_comparisons_replicates}
        % \end{table}    
\end{sidewaystable}
\section{Real Data Analyses}\label{sec:real_data_STARmap_genes}
\subsection{One-sample analysis}
 We considered the STARmap (spatially-resolved transcript amplicon readout mapping) dataset \citep{STARmap}. We first analyzed one of the experimental mice and later extended the analysis to multiple mice. The experimental mouse was dark housed for four days and exposed to light for one hour before obtaining measurements from the primary visual cortex, which consisted of the expression of 160 genes in 975 single cells. The spatial locations of these single cells in the tissue were also recorded. We removed cells showing extreme expression of genes. The data were log-normalized with a scaling factor equal to the median expression of total reads per cell. We focused our analysis on the representative genes corresponding to ``excitatory", ``inhibitory", and ``non-neuronal" cell types, which consists of the $13$ genes as mentioned in \citealp{STARmap}. \par
 We ran the blocked Gibbs sampler for $5,000$ iterations. The traceplot (Figure \ref{fig:traceplot}) of the log-likelihood  after discarding the first $2500$ samples as burn-in showed no lack of convergence.

The autocorrelation plot (Figure \ref{fig:ACF}) of the log-likelihood showed low autocorrelation and hence no thinning was needed.

\begin{figure}[http]
\centering
\begin{subfigure}{0.65\textwidth}
  \centering
  \includegraphics[width= 1\linewidth]{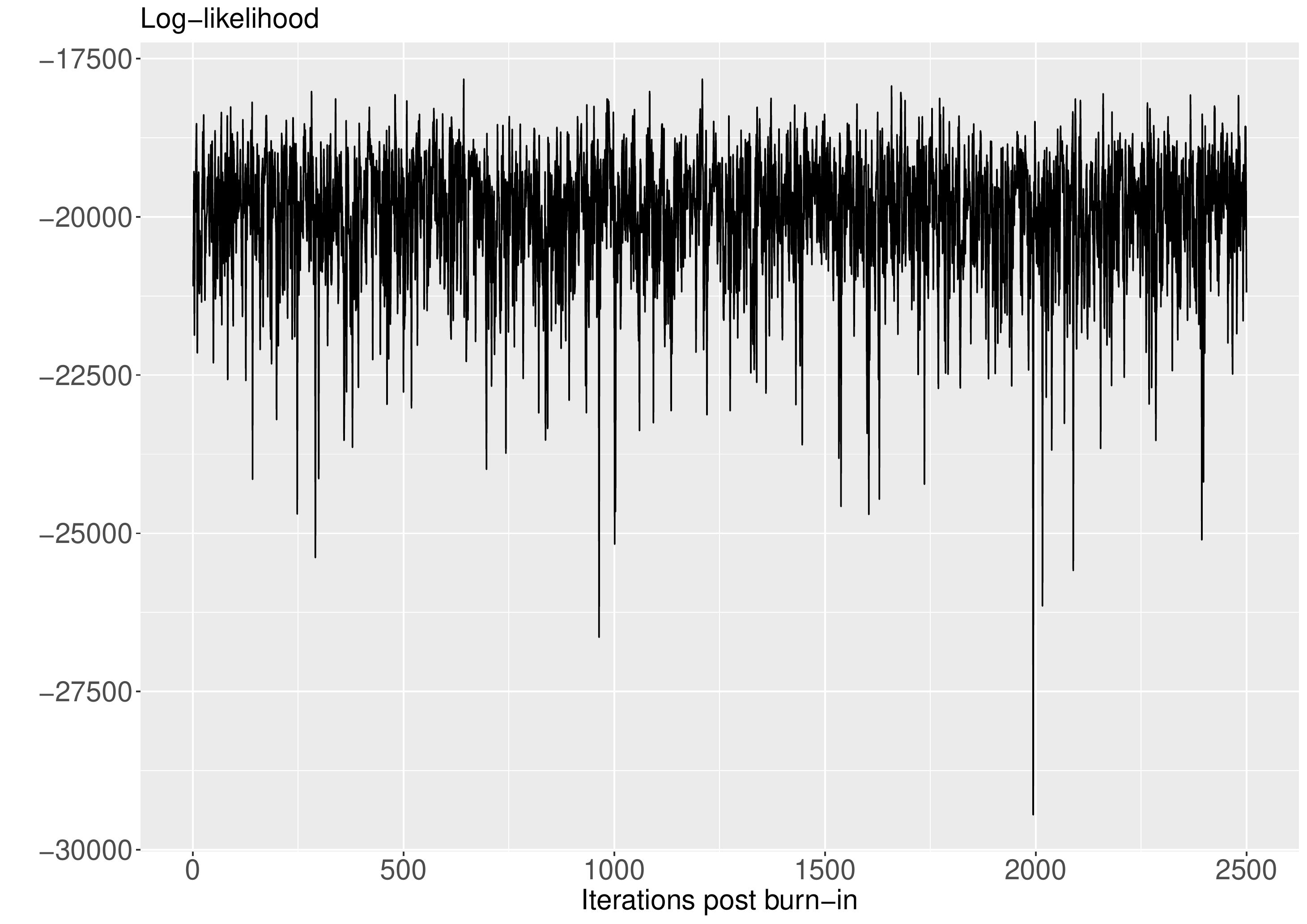}
  \caption{Traceplot of log-likelihood.}
  \label{fig:traceplot}
\end{subfigure}
\par
\begin{subfigure}{.65\textwidth}
  \centering
 \includegraphics[width= 1\linewidth]{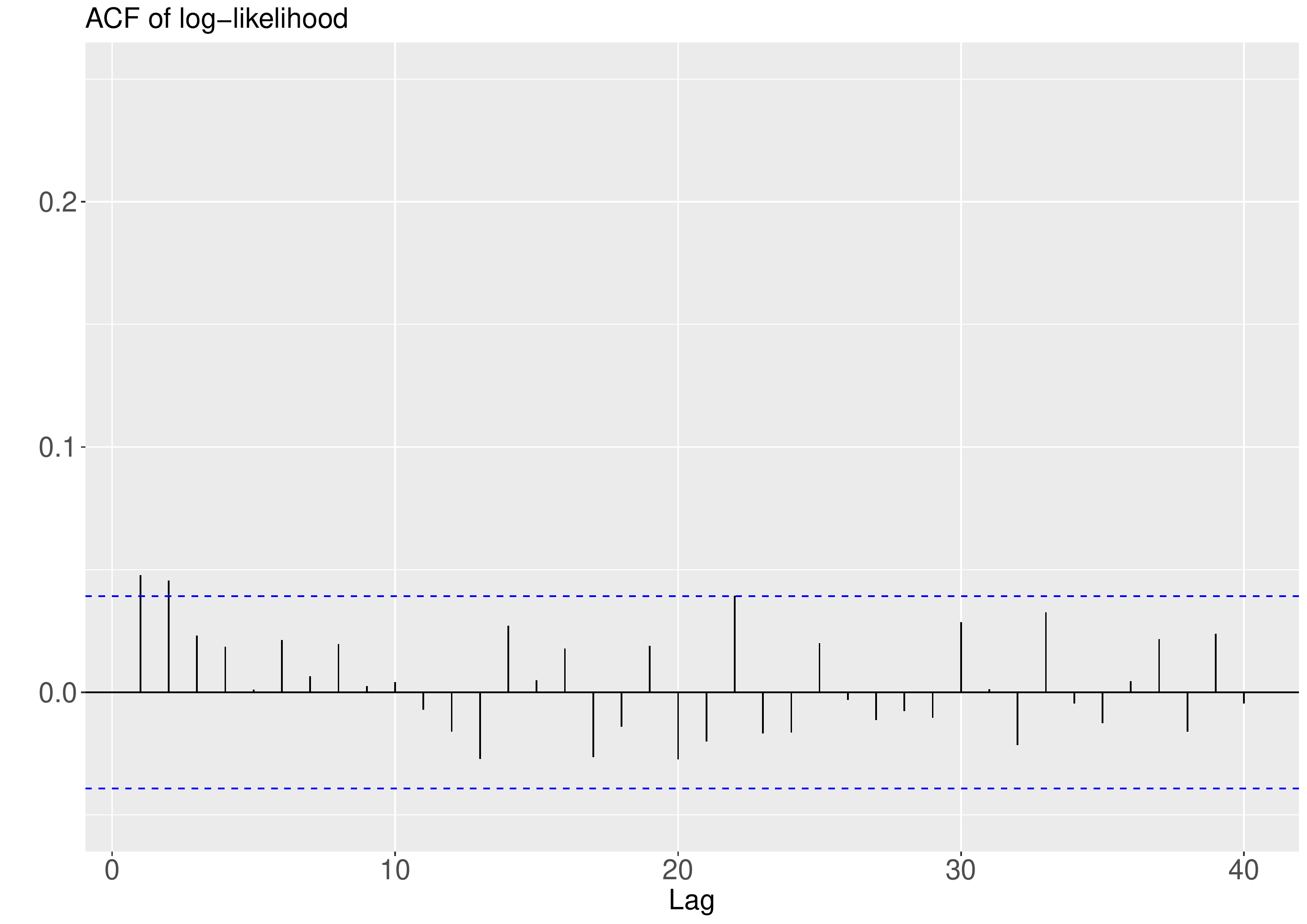}
  \caption{Autocorrelation plot of log-likelihood.}
  \label{fig:ACF}
\end{subfigure}
\caption{Single-sample. Traceplot and Autocorrelation plot of log-likelihood.}
\label{fig:traceplot_ACF}
\end{figure}
The posterior estimate of the row precision matrix (inverse covariance) was used to calculate the partial correlations among the representative genes. The heatmap of the partial correlations is shown in Figure \ref{fig:gene_corr}. 

 Further, considering only the magnitude of partial correlations, and a cut-off of $0.1$, we obtained the estimated co-expression network among the representative genes (Figure \ref{fig:gene_network}). The estimated co-expression network shows that ``Slc17a7" is a hub gene. The importance of this gene is established from its spatial expression pattern by the STARmap platform. Excitatory cells are directly related to the spreading of network activity in and outside of the neuronal networks in the brain \citep{excitatory_inhibitory}. This possibly validates the co-expression network between almost all the representative genes corresponding to ``excitatory cells" as well as association with some genes representative to ``inhibitory" and ``non-neuronal" cells.  The genes ``Gad1" and ``Pvalb", both being representative genes corresponding to ``inhibitory" cells, are connected in our estimated co-expression network, while the other genes corresponding to ``inhibitory cells" do not show co-expression pattern. This lack of association possibly conforms to the biological functioning of ``inhibitory" cells, which are more likely to inhibit stimuli to other cells \citep{inhibitory_functions}.

% THE PLOT OF HEATMAP OF PARTIAL CORRELATIONS AND GENE NETWORK TOGETHER
\begin{figure}[http]
\centering
\begin{subfigure}[t]{0.65\textwidth}
  \centering
  \includegraphics[width=1\linewidth]{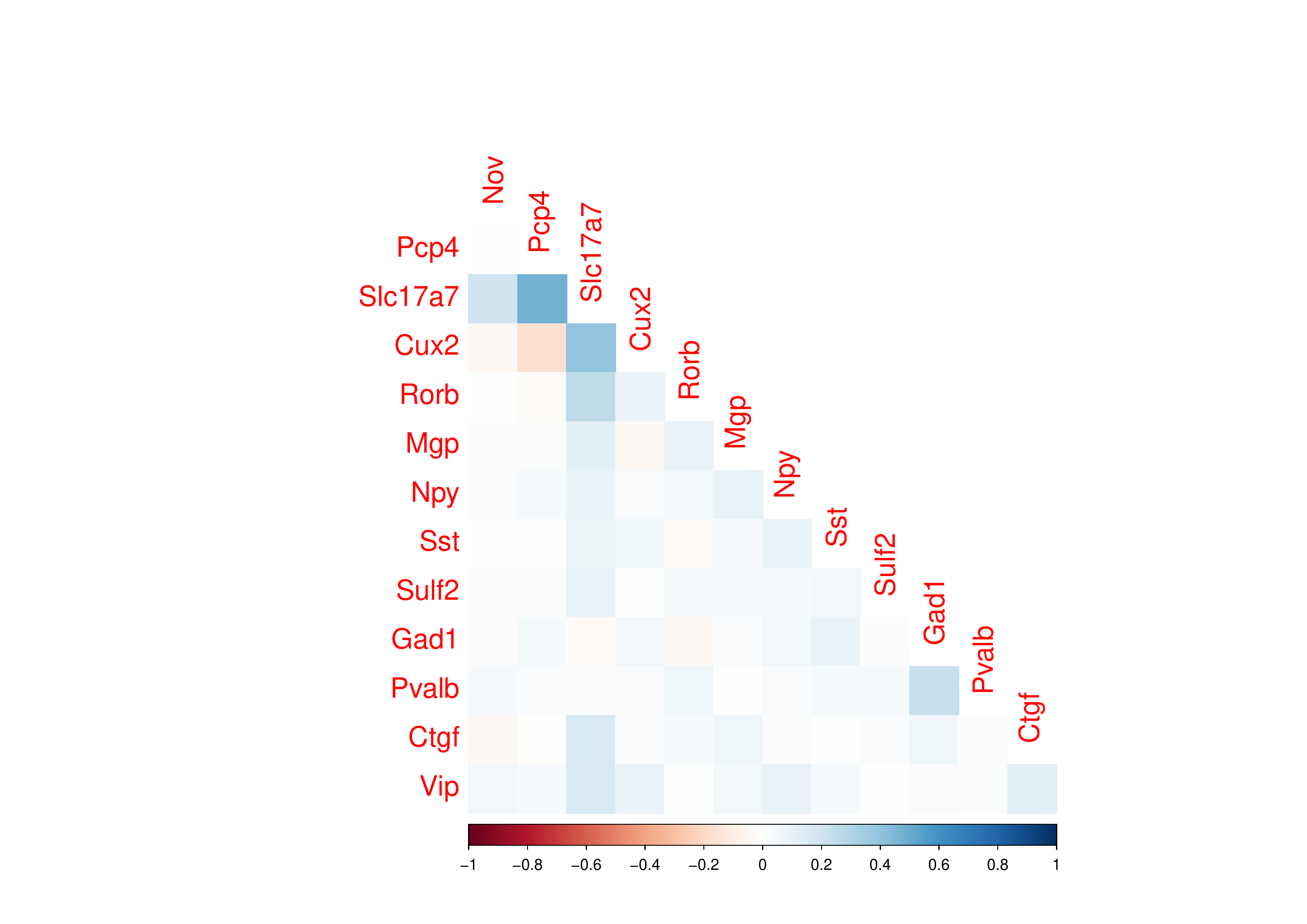}
  \caption{Heatmap of the estimate of partial correlations between the representative genes.}
  \label{fig:gene_corr}
\end{subfigure}
\qquad
\begin{subfigure}[t]{.65\textwidth}
  \centering
  \includegraphics[width= 1\linewidth]{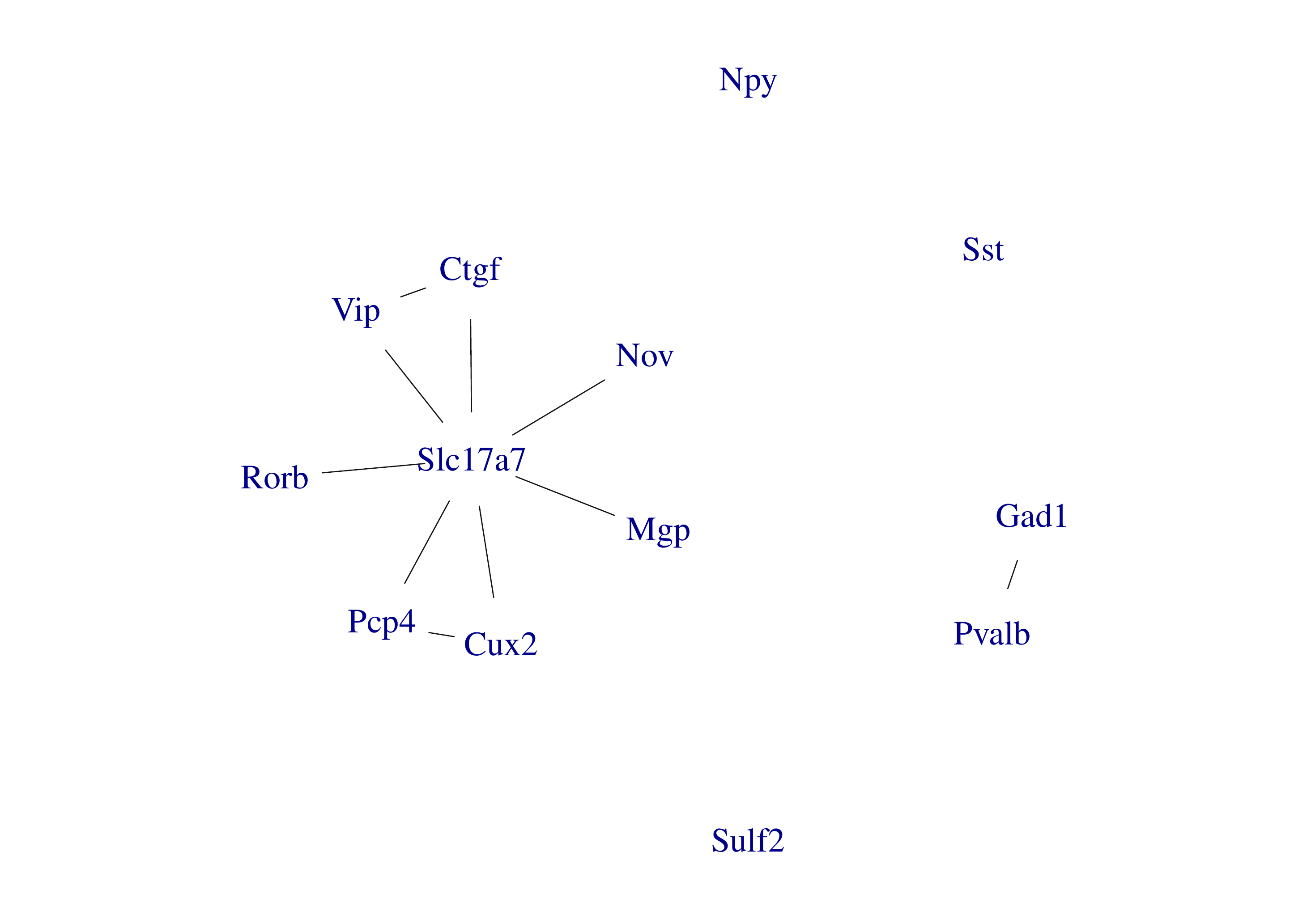}
  \caption{Estimated network among the representative genes.}
  \label{fig:gene_network}
\end{subfigure}
\caption{Single-sample analysis.}
\label{fig:gene_corr_network}
\end{figure}

\begin{figure}[http]
\centering
\begin{subfigure}{1\textwidth}
  \centering
  \includegraphics[width= 0.75\linewidth]{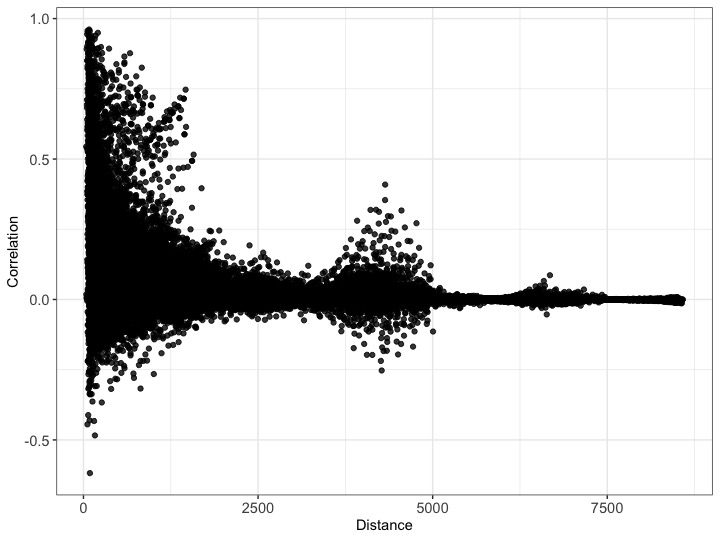}
  \caption{Spatial correlations estimated using our proposed method.}
   \label{fig:Spatial_Corr_Dist_Post}
\end{subfigure}
\par\bigskip
\begin{subfigure}{0.75\textwidth}
\centering
  \includegraphics[width= 1\linewidth]{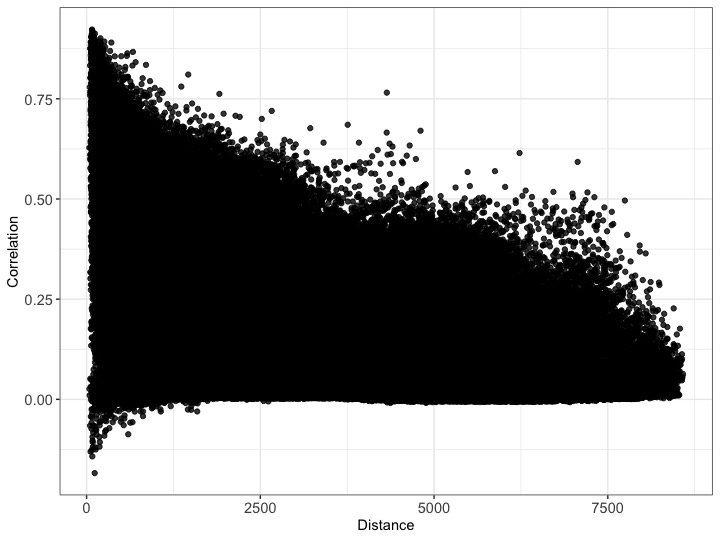} 
  \caption{Spatial correlations estimated from the MAP method.}
  \label{fig:Spatial_Corr_Dist_MAP}
\end{subfigure}
\caption{Single-sample. Plot of spatial correlations estimated from the two different methods against spatial distances between the pair of cells.}
\label{fig:Spatial_Corr_Dist}
\end{figure}
In Figure \ref{fig:Spatial_Corr_Dist}, we also plot the column spatial correlations against the corresponding distances for the proposed method as well as the MAP estimate. From the Figure \ref{fig:Spatial_Corr_Dist_Post}, we see that spatial correlation generally decays as the distance increases. However, the spatial correlation from the MAP estimate (Figure \ref{fig:Spatial_Corr_Dist_MAP}) does not decay significantly with distance. As the MAP model 
% by \citealp{brian}
does not account for the correlation between genes, the spatial correlations are in general over-estimated using their method. From Figure \ref{fig:Spatial_Corr_Dist_Post}, we notice that there is a bump in the correlations whose corresponding distances are between $\sim 3,000$ and $\sim5,000$. Further investigation reveals that although the cells are spatially separated, the presence of correlated genes leads to this higher spatial correlations. The Figure \ref{fig:CellCellConnection}
shows the spatial plot of the cells exhibiting high correlations despite  being spatially separated. They may be of the same molecular cell type although they are far apart.
%(from Figure \ref{fig:Spatial_Corr_Dist_Post}).

\begin{figure}[http]
\centering
\includegraphics[width =0.75\linewidth]{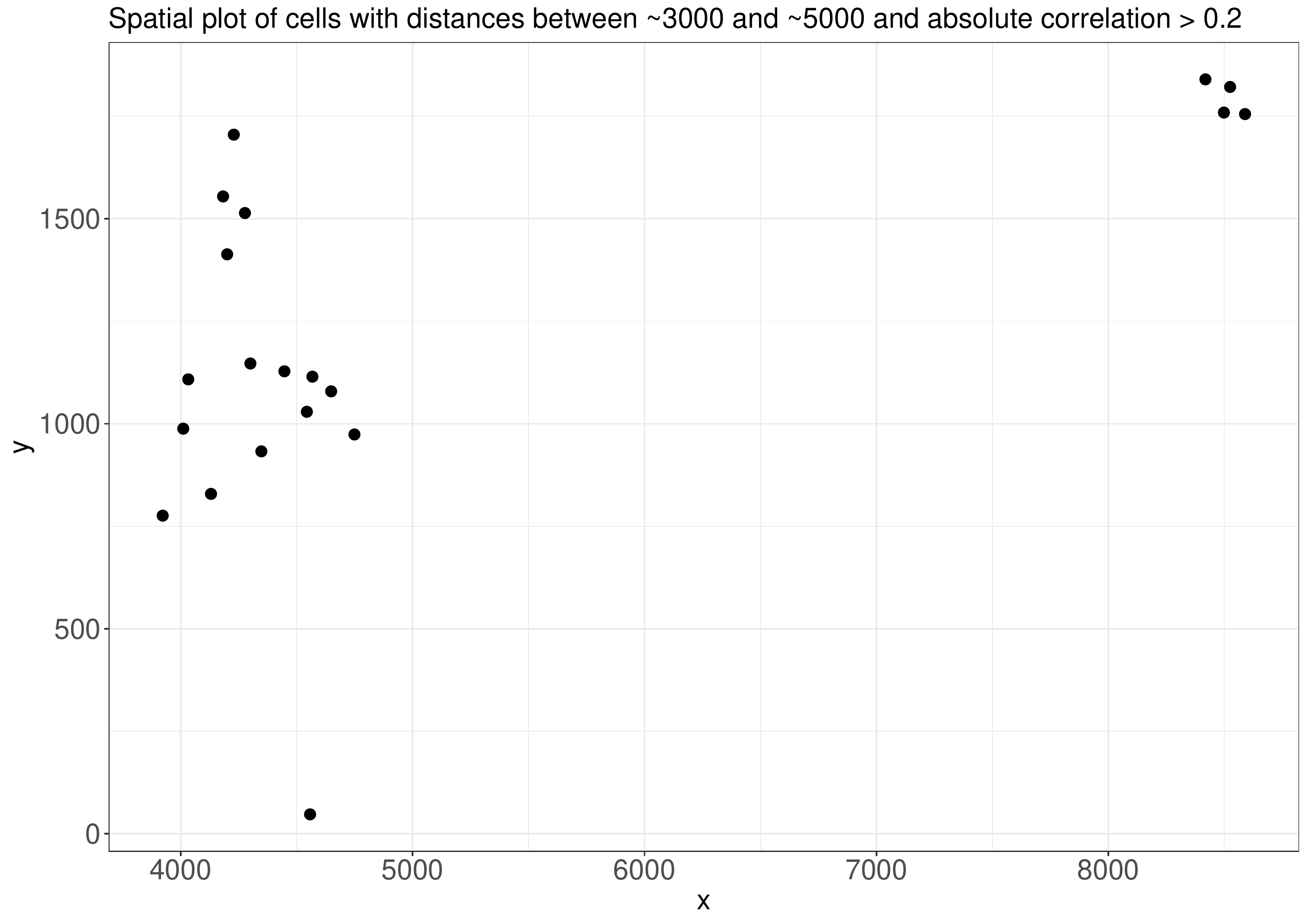}
\caption{Single-sample. Spatial plot of separated cells showing high correlations.}
\label{fig:CellCellConnection}
\end{figure}

\subsection{Multi-sample analysis with comparison}
\label{sec:real_data_multiple_sample_comparison}

We now consider the full STARMap dataset, which consists of data from four independent samples/mice. The four mice were ``dark housed" for four days and then either exposed to light or kept in the dark for another one hour before obtaining their measurements. The number of  cells varied from 931 to 1167 for the four different samples. We selected the first 50 spatially variable genes using R package \texttt{DR.SC} \citep{DR.SC.package} for each of the four independent samples and considered a common set of genes, which led to  $17$ genes. We ran our Gibbs sampler for 5,000 iterations. To monitor the convergence we looked at the traceplot of the log-likelihood, after discarding the first 2,500 samples as burn-in, which showed good mixing (Figure \ref{fig:LL.four}). The corresponding autocorrelation  plot (Figure \ref{fig:ACF.Four}) showed low autocorrelation and hence no thinning was needed. Using the posterior estimate of the spatial correlation matrix, we performed spectral clustering. Treating the spatial correlation matrix for each sample as the weighted similarity matrix, we computed its normalized graph Laplacian matrix. We looked at the plot of the first $10$ smallest eigenvalues of the normalized graph Laplacian matrix and chose the number of eigenvectors $k$, for spectral clustering from the elbow of the plot (Figure \ref{fig:EP}). We extracted the $k$ eigenvectors corresponding to the $k$ smallest eigenvalues and performed $k$-means clustering on the eigenvectors \citep{SC_tutorial}. We chose the number of clusters from the elbow plot of the total within sum of squares (Figure \ref{fig:WSS}). 

\begin{figure}[http]
\centering
\begin{subfigure}{0.65\textwidth}
  \centering
  \includegraphics[width= 1\linewidth]{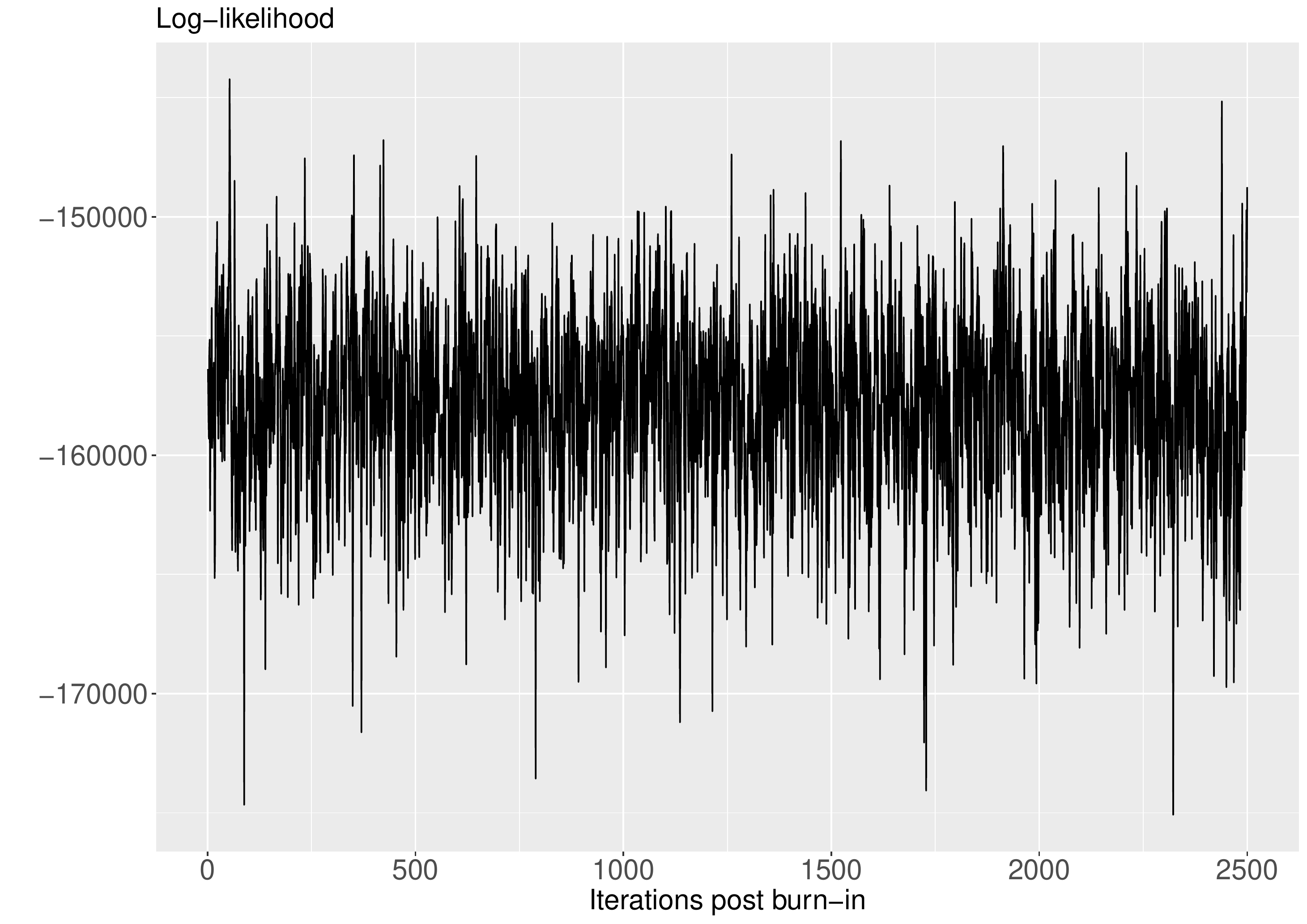}
  \caption{Multi-sample. Traceplot of log-likelihood.}
  \label{fig:LL.four}
\end{subfigure}
\par
\begin{subfigure}{.65\textwidth}
  \centering
 \includegraphics[width= 1\linewidth]{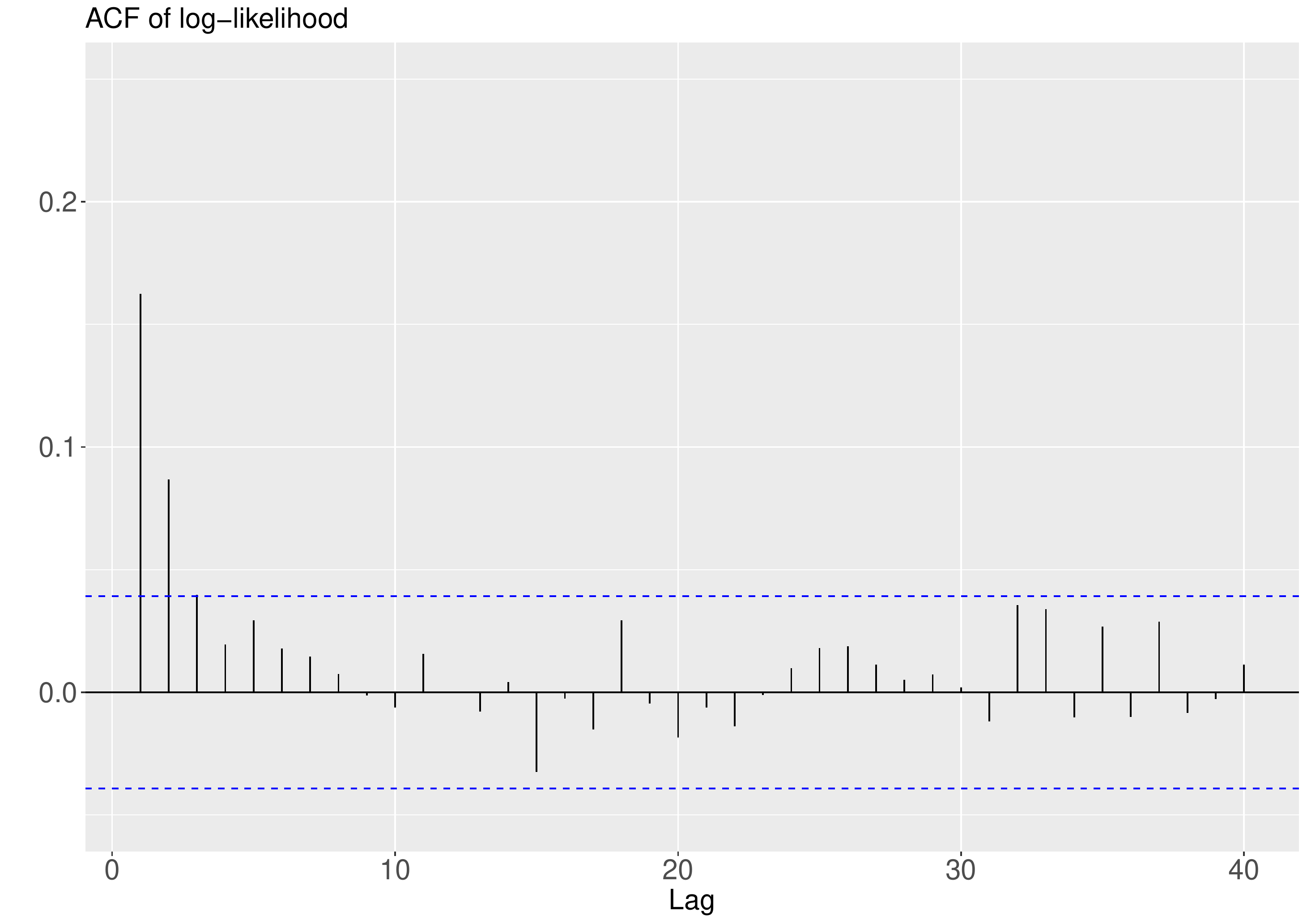}
  \caption{Multi-sample. Autocorrelation plot of log-likelihood.}
  \label{fig:ACF.Four}
\end{subfigure}
\caption{Multi-sample. Traceplot and Autocorrelation plot of log-likelihood.}
\label{fig:LL.ACF.four}
\end{figure}

\begin{figure}
        \centering
        \begin{subfigure}[b]{0.48\textwidth}
            \centering
            \includegraphics[width=\textwidth]{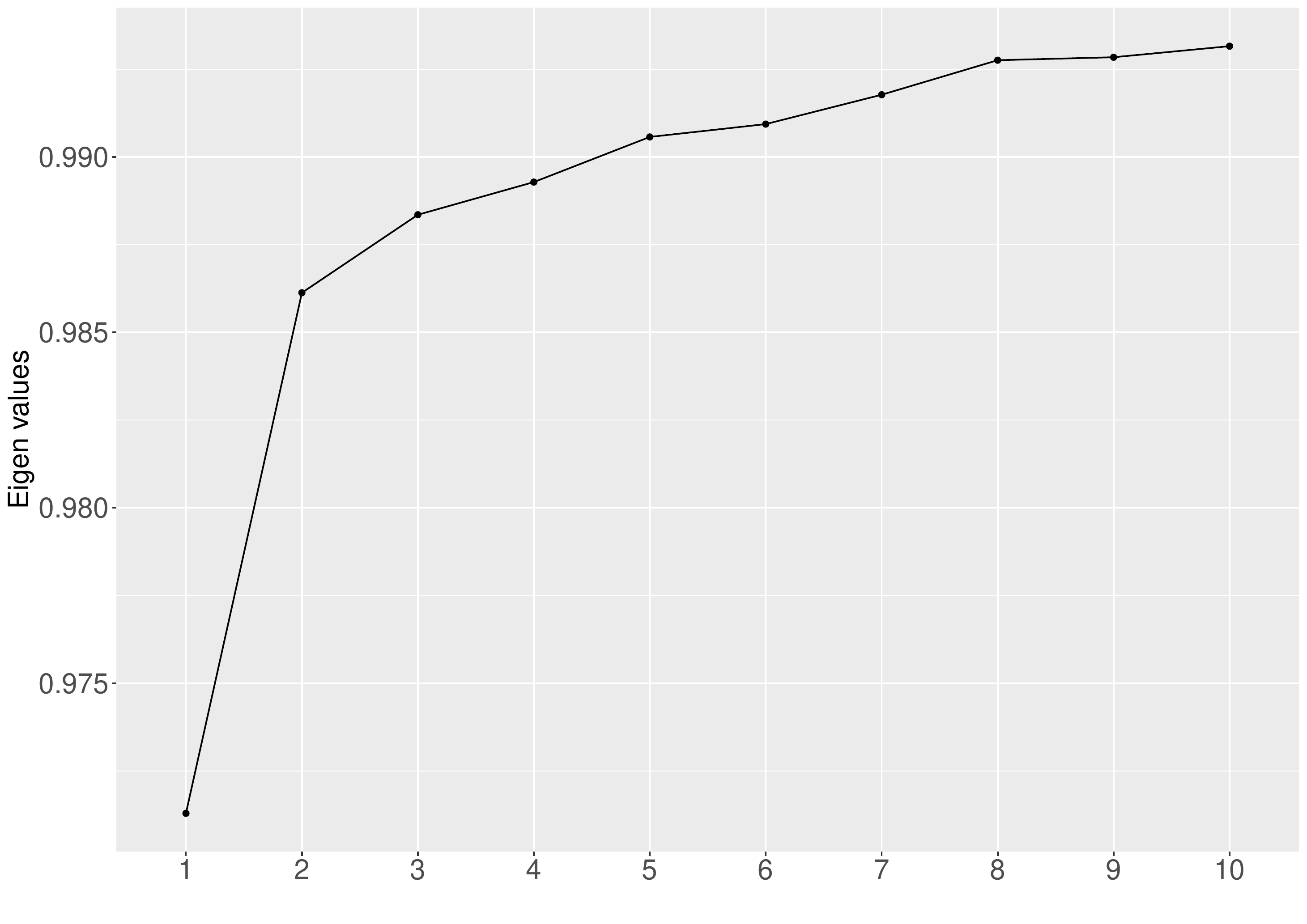}
            \caption{\small Sample 1.}  
            \label{fig:EP1}
        \end{subfigure}
        \hfill
        \begin{subfigure}[b]{0.48\textwidth}  
            \centering 
            \includegraphics[width=\textwidth]{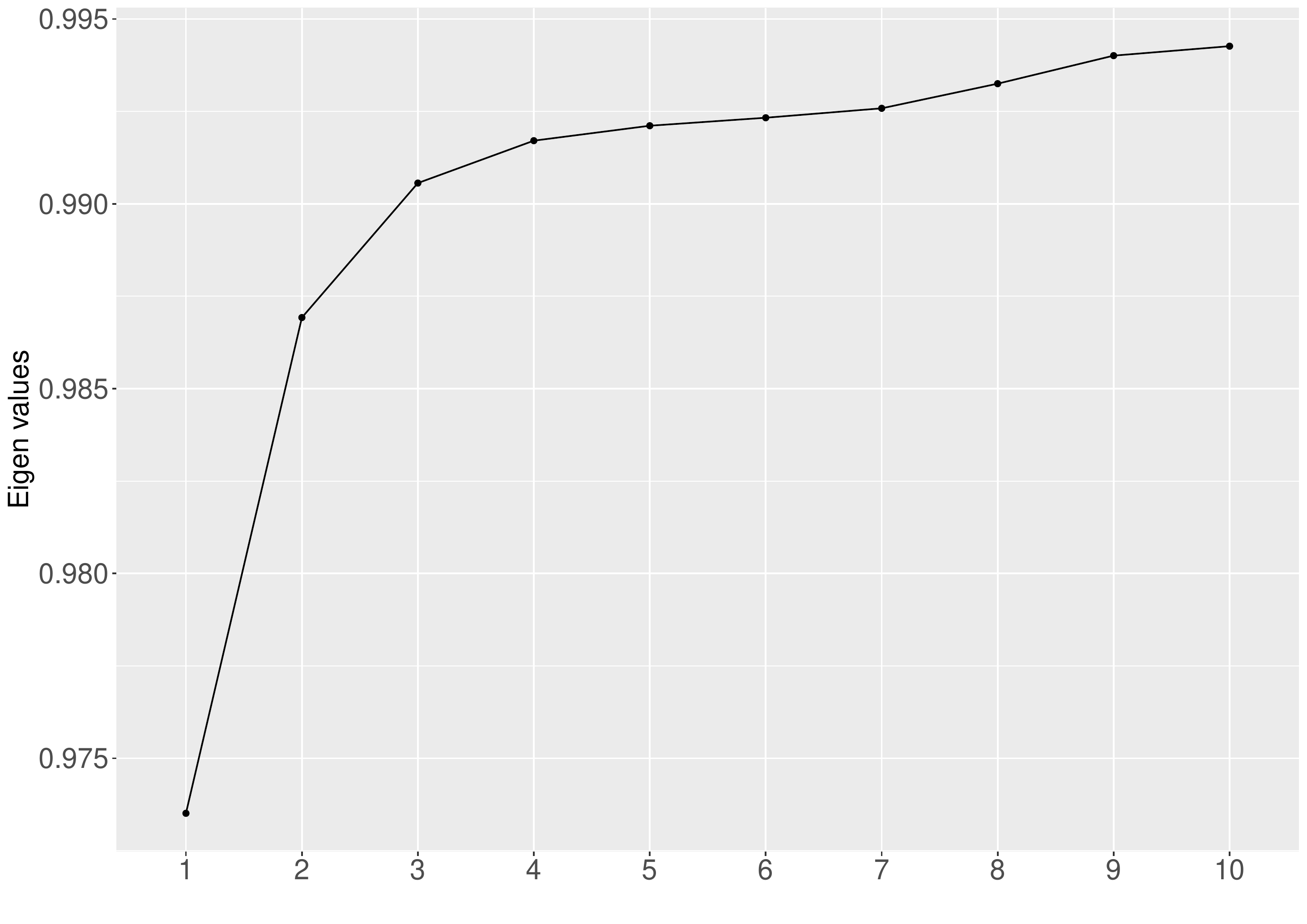}
            \caption{\small Sample 2}    
            \label{fig:EP2}
        \end{subfigure}
        \vskip\baselineskip
        \begin{subfigure}[b]{0.48\textwidth}   
            \centering 
            \includegraphics[width=\textwidth]{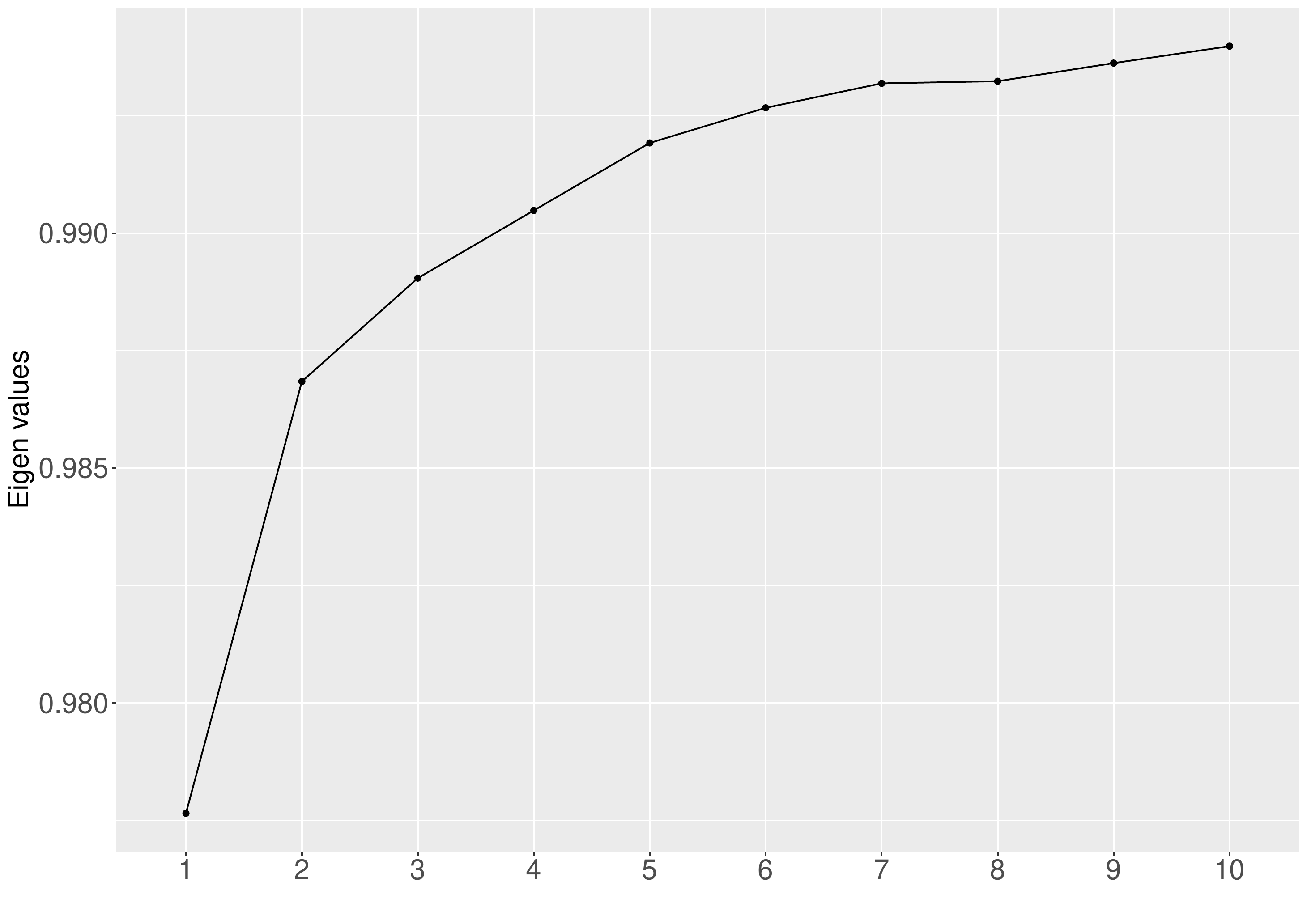}
            \caption{\small Sample 3.} 
            \label{fig:EP3}
        \end{subfigure}
        \hfill
        \begin{subfigure}[b]{0.48\textwidth}   
            \centering 
            \includegraphics[width=\textwidth]{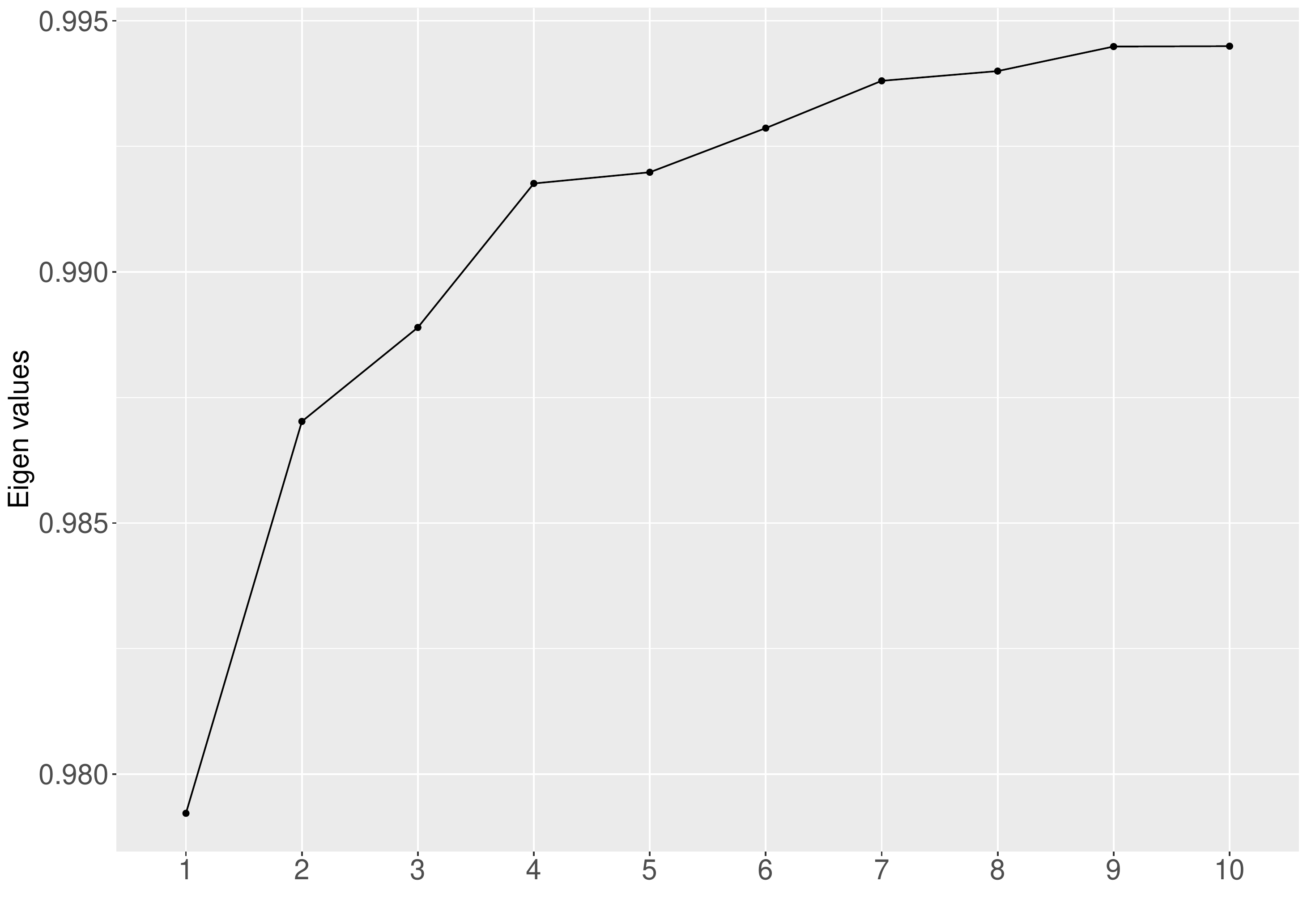}
            \caption{\small Sample 4.}
            \label{fig:EP4}
        \end{subfigure}
        \caption{Multi-sample. The plot of first $10$ smallest eigenvalues of the graph Laplacian matrix for the different samples. We chose the first 3 eigenvectors for samples 1-3 and first 4 eigenvectors for sample 4 to perform spectral clustering.} 
        \label{fig:EP}
    \end{figure}
    
 \begin{figure}
        \centering
        \begin{subfigure}[b]{0.48\textwidth}
            \centering
            \includegraphics[width=\textwidth]{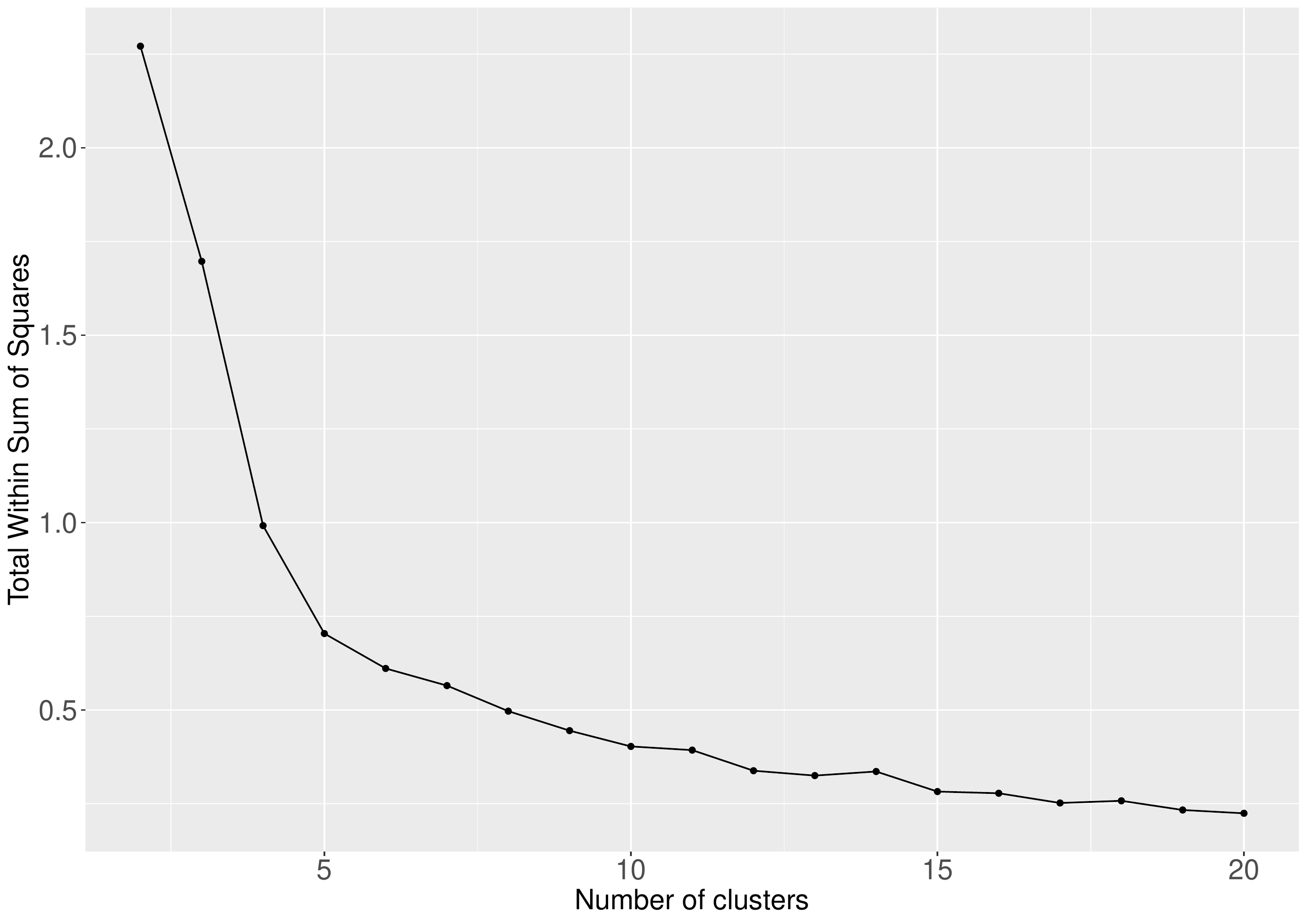}
            \caption{\small Sample 1.}  
            \label{fig:WSS1}
        \end{subfigure}
        \hfill
        \begin{subfigure}[b]{0.48\textwidth}  
            \centering 
            \includegraphics[width=\textwidth]{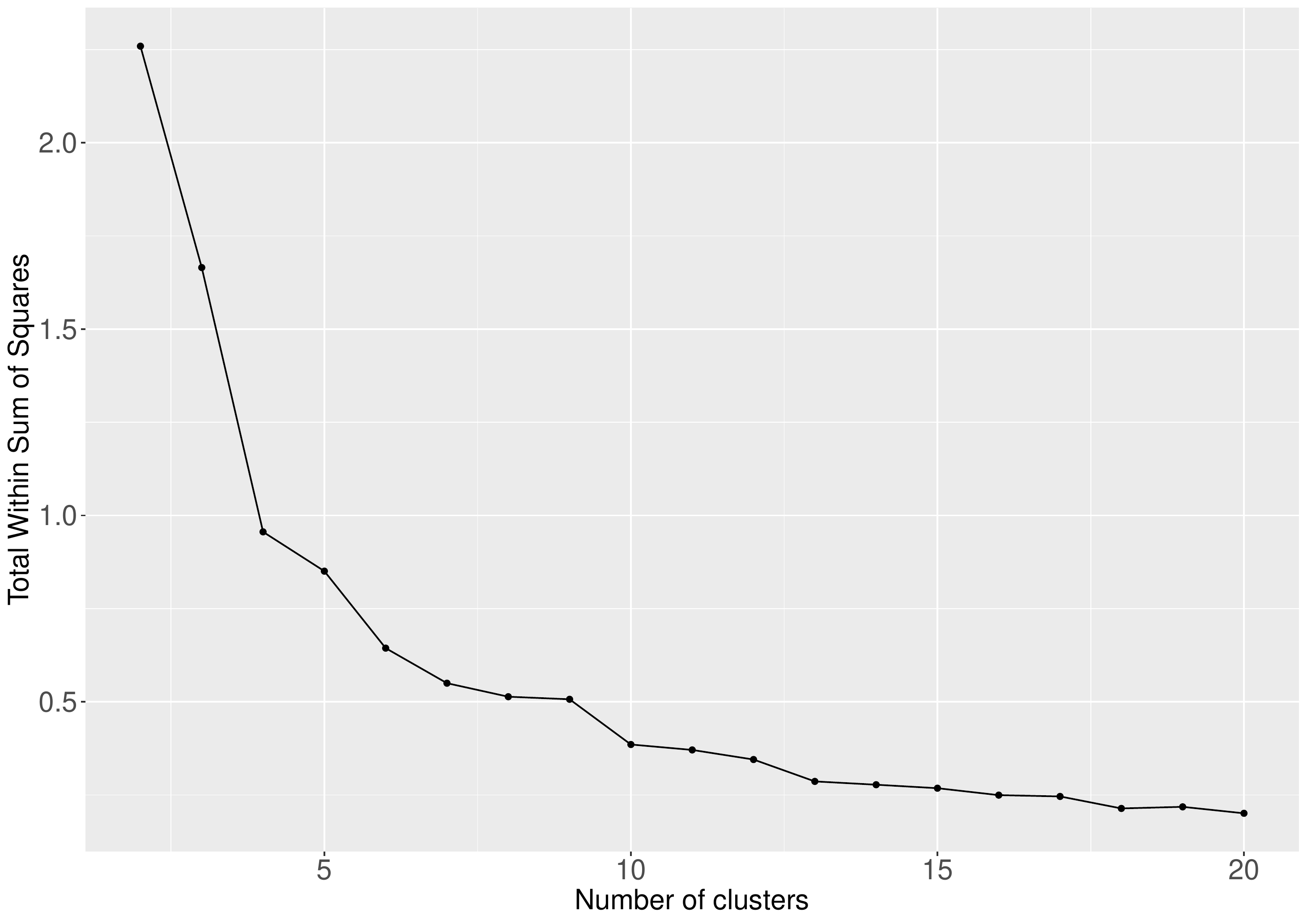}
            \caption{\small Sample 2}    
            \label{fig:WSS2}
        \end{subfigure}
        \vskip\baselineskip
        \begin{subfigure}[b]{0.48\textwidth}   
            \centering 
            \includegraphics[width=\textwidth]{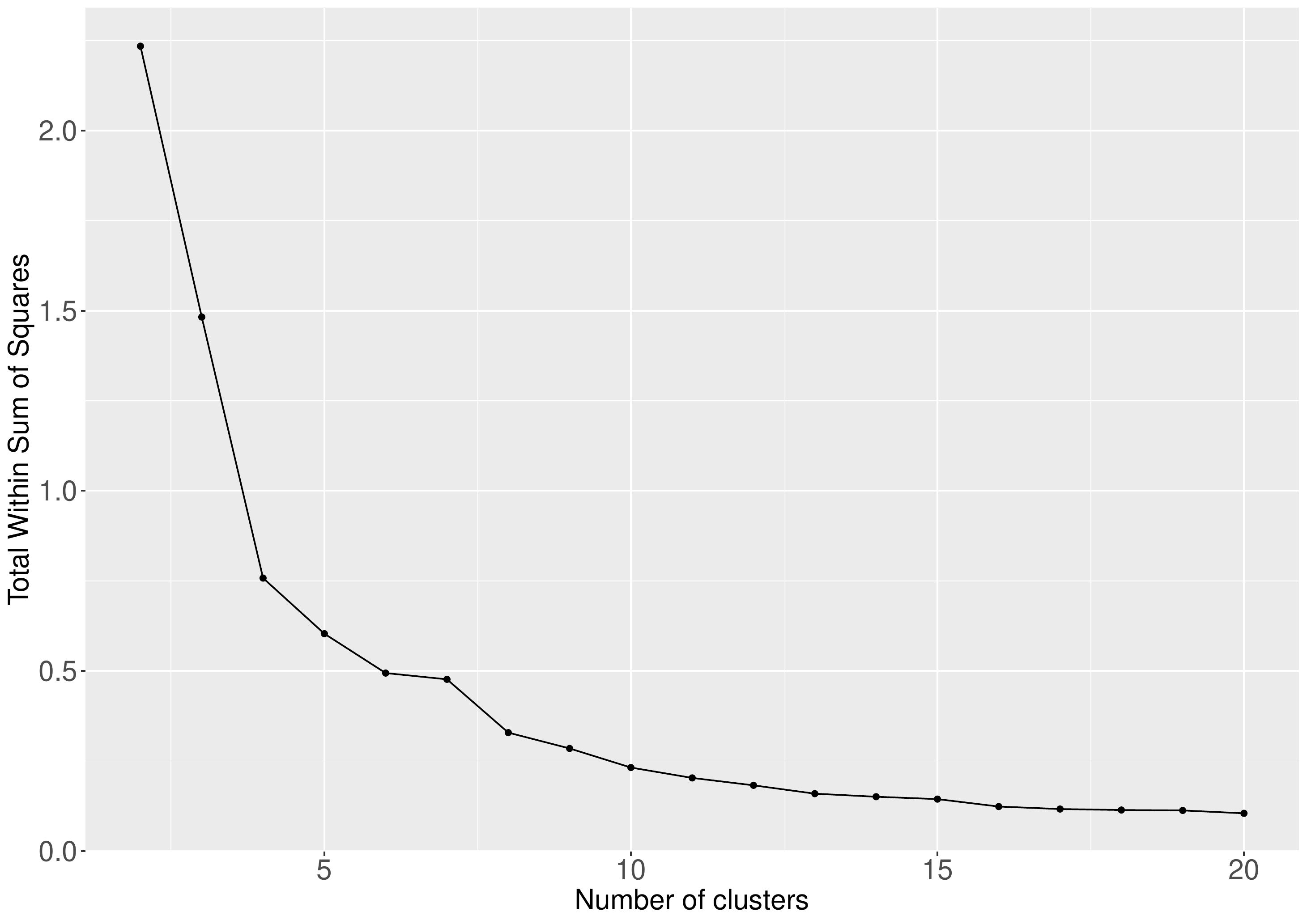}
            \caption{\small Sample 3.} 
            \label{fig:WSS3}
        \end{subfigure}
        \hfill
        \begin{subfigure}[b]{0.48\textwidth}   
            \centering 
            \includegraphics[width=\textwidth]{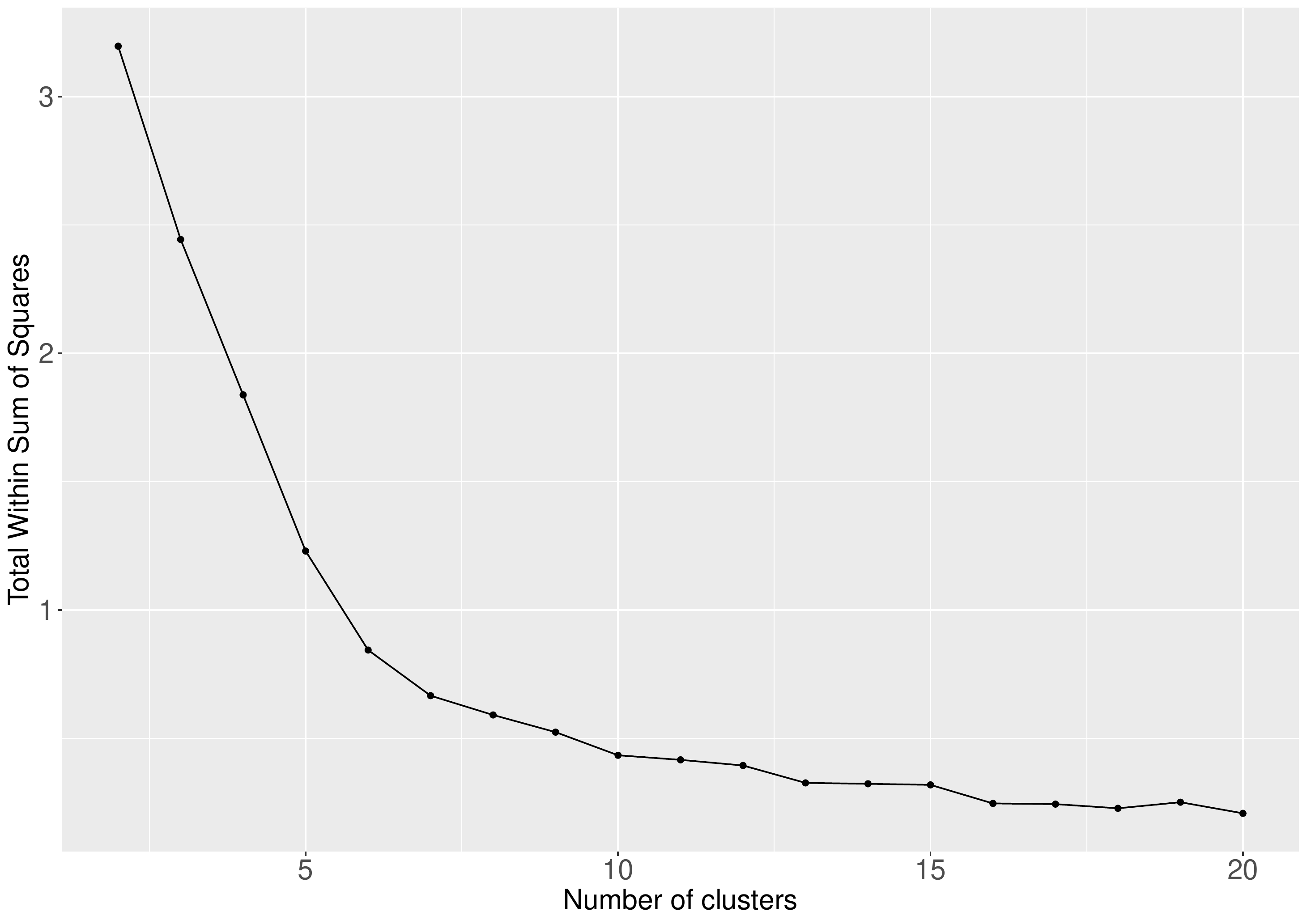}
            \caption{\small Sample 4.}
            \label{fig:WSS4}
        \end{subfigure}
        \caption{Multi-sample. The plot of within cluster sum of squares (WSS) against number of clusters for the different samples. The elbow can be seen to correspond to  $5$ clusters for all of the four samples.} 
        \label{fig:WSS}
    \end{figure}

We compared the proposed method to a recent spatial clustering algorithm specifically designed for spatial transcriptomic data implemented in the \texttt{DR.SC} R package \citep{DR.SC.package}. 

\begin{table}[http]
\centering
\begin{tabular}{ |c|c|c| }
 \hline
  \multicolumn{3}{|c|}{Total Within Cluster Sum of Squares} \\ 
 \cline{1-3}
  Sample & Using our Posterior estimate & Using R package \texttt{DR.SC} \\
  \hline
 1 &  13969 &   16262 \\
 2 &  24718 &   27247 \\
 3 &  12185 &   13716 \\
 4 &  13191 &   14517 \\
 \hline
\end{tabular}
\caption{Multi-sample. The total within cluster sum of squares for the two methods for different samples.}
\label{tab:WCSS_replicated}
\end{table}
Figure \ref{fig:spatial clustering with relicated data}  gives the plot of spatial clustering using the proposed method and the \texttt{DR.SC} R package for all four samples. 
Although we do not have the ground truth, the clusters identified by the proposed method exhibits interesting spatial patterns, which are much less clear under \texttt{DR.SC}. Lack of spatial patterns is biologically less plausible.
The total within cluster sum of squares (WCSS) of the gene expression (Table \ref{tab:WCSS_replicated}) also indicate that the cells  are molecularly more similar within each of the clusters estimated from the proposed method than \texttt{DR.SC}. 
The posterior estimate of the row correlation matrix was used to construct the gene co-expression network. As before, considering a cut-off of 0.1 for the partial correlations, Figure \ref{fig:GRN_replicated} shows the estimated network. We can see that the gene ``Arx" is a hub and shows significant association with multiple genes. The importance of this gene has been established. The ``Arx" gene provides instructions for producing a protein that regulates the activity of other genes (\citealp{Arx1, Arx2}). 
%A tissue sample data from the ``\emph{Allen Brain Atlas Developing Human Brain Tissue Gene Expression Profiles by Microarray}'' dataset
% \footnote{Allen Institute for Brain Science (2010). Allen Human Brain Atlas: Microarray [dataset]. Available from human.brain-map.org.\\ RRID:SCR_007416. Primary publication: Hawrylycz, M.J., et al. (2012). An anatomically comprehensive atlas of the adult human transcriptome. Nature, 489(7416), 391-399. https://doi.org/10.1038/nature11405.}
% \footnote{Human brain tissue sample identified as cerebellar \emph{cortex\_12 pcw\_F\_13060} (Allen Brain Atlas).} 
An independent study \citep{BrainAtlasPaper, database1} reveals low expression of the gene ``Arx" and ``Slc17a7", both of which are hub genes in our estimated network and are connected to each other. It is also revealed that the genes ``Otof" and ``Bcl6" show low expression. A possible explanation for this phenomenon is their connection with the hub gene ``Arx" (Figure \ref{fig:GRN_replicated}), which possibly down regulates the expression of the ``Otof" and ``Bcl6" genes as can be seen from their positive partial correlations with the ``Arx" gene in Figure \ref{fig:CorrPlotSTARmap}. 
% THE PLOT OF HEATMAP OF PARTIAL CORRELATIONS AND GENE NETWORK TOGETHER
\begin{figure}[http]
\centering
\begin{subfigure}[t]{0.65\textwidth}
  \centering
  \includegraphics[width= \linewidth]{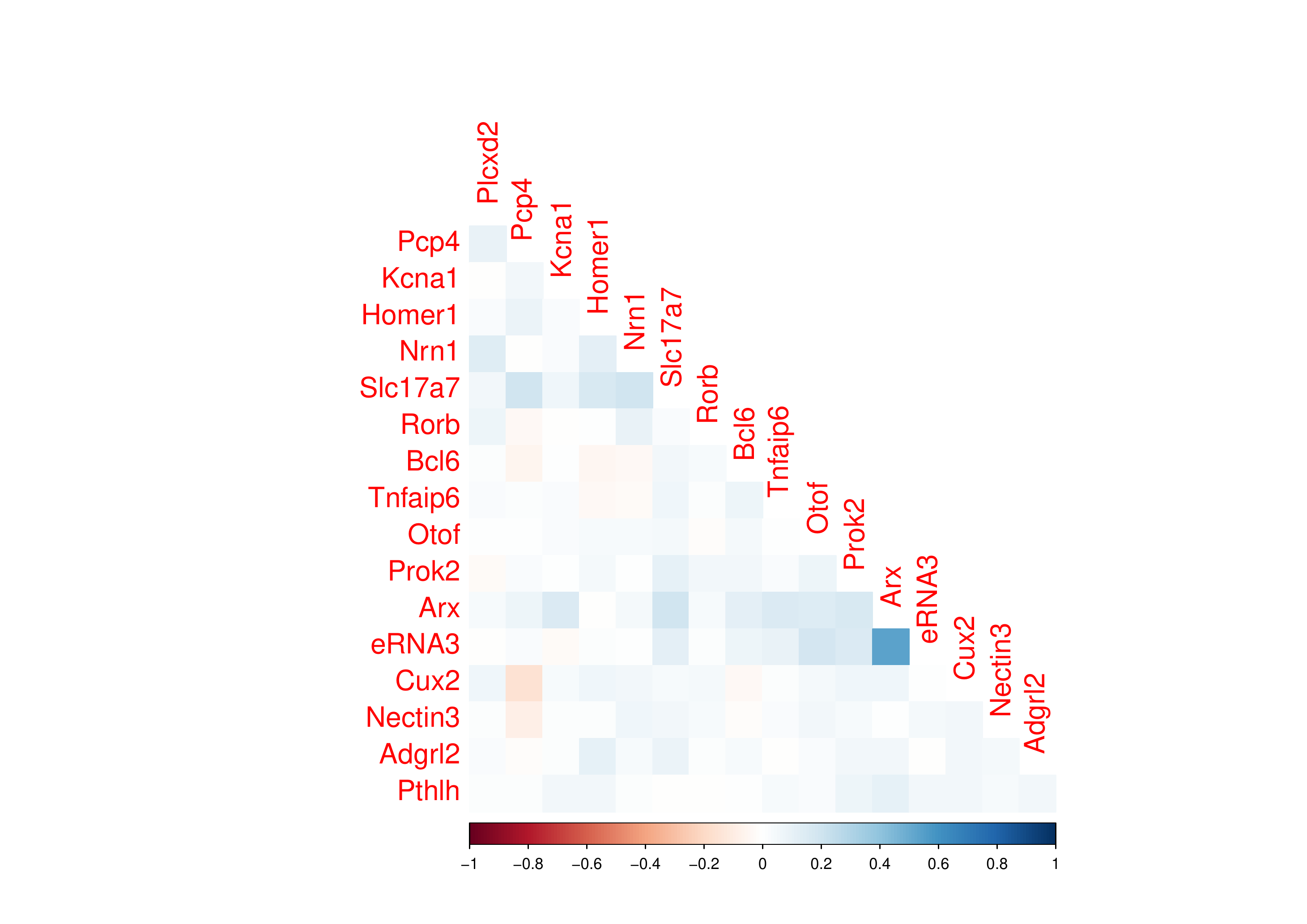}
  \caption{Heatmap of the estimate of partial correlations between the selected genes.}
  \label{fig:CorrPlotSTARmap}
\end{subfigure}
\qquad
\begin{subfigure}[t]{.65\textwidth}
  \centering
  \includegraphics[width= \linewidth]{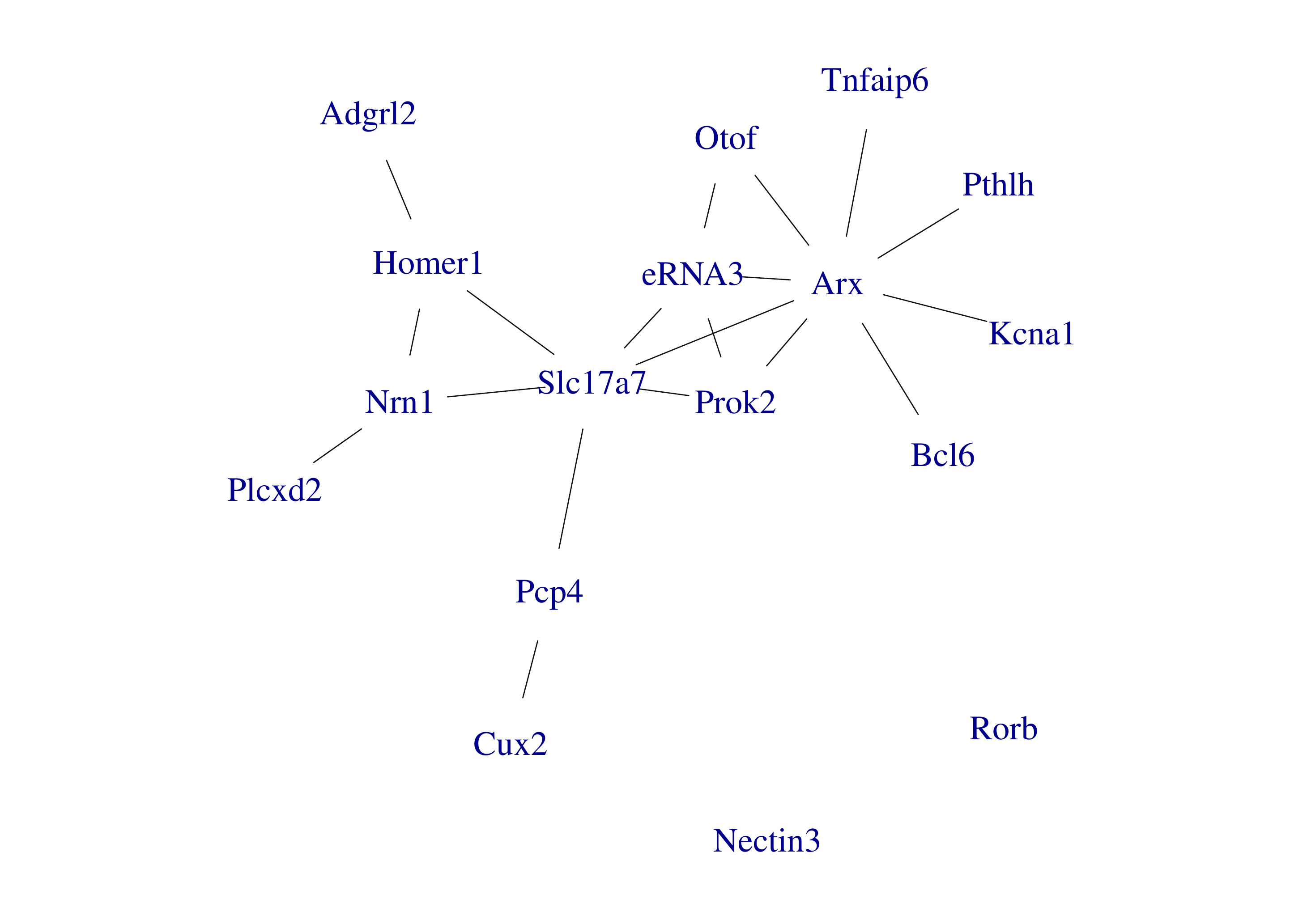}
  \caption{Estimated gene network from our posterior row precision matrix between the selected genes.}
  \label{fig:GRN_replicated}
\end{subfigure}
\caption{Multi-sample analysis.}
\label{fig:gene_corr_network_replicated}
\end{figure}

\begin{figure}[http]
\centering
\begin{subfigure}[t]{0.48\textwidth}
  \centering
  \includegraphics[width= \linewidth]{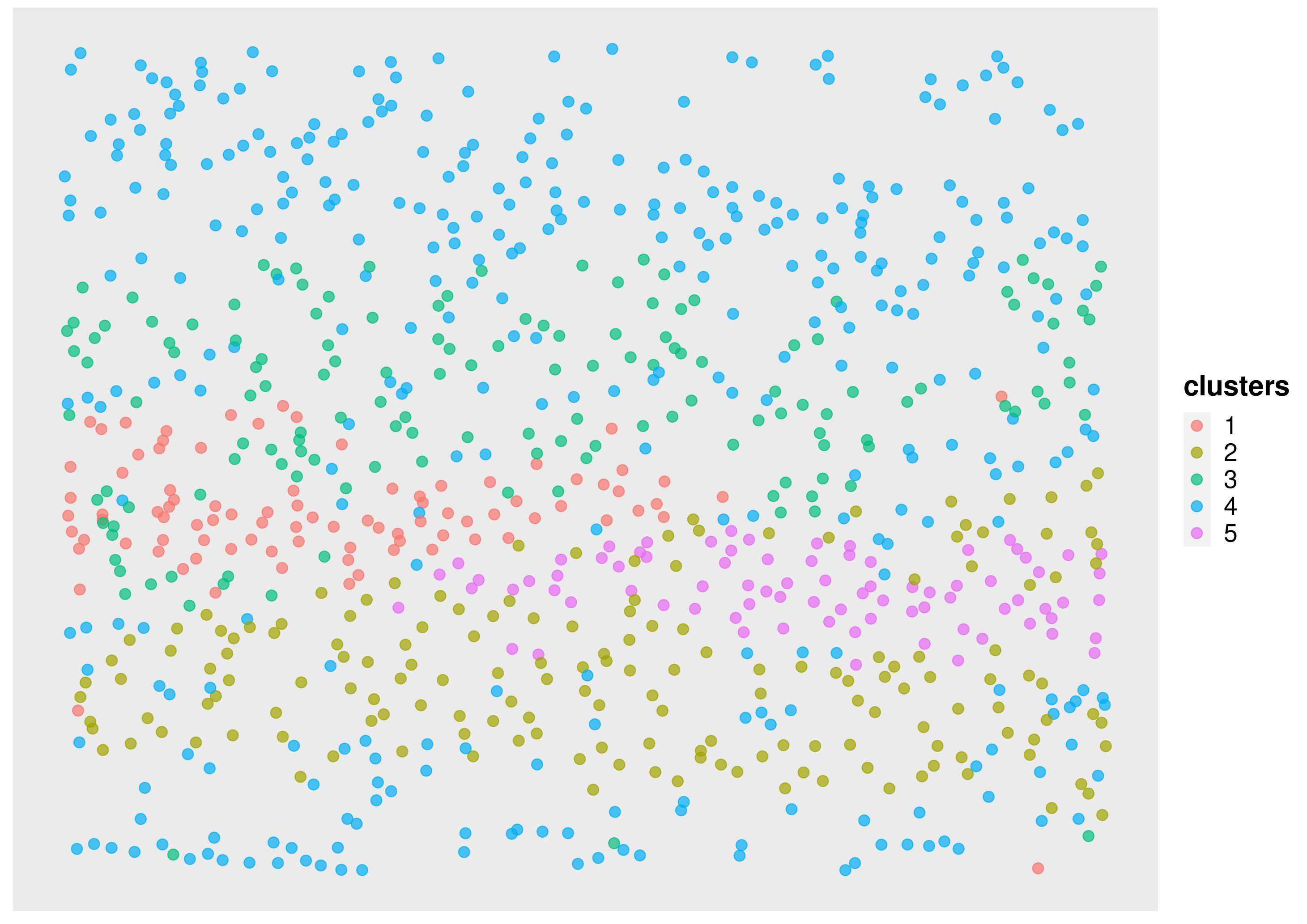}
  \caption{Clustering with spectral clustering with posterior spatial correlations for sample 1.}
  \label{fig:clustering_bayes1}
\end{subfigure}
% \par\bigskip
\begin{subfigure}[t]{.48\textwidth}
  \centering
  \includegraphics[width=\linewidth]{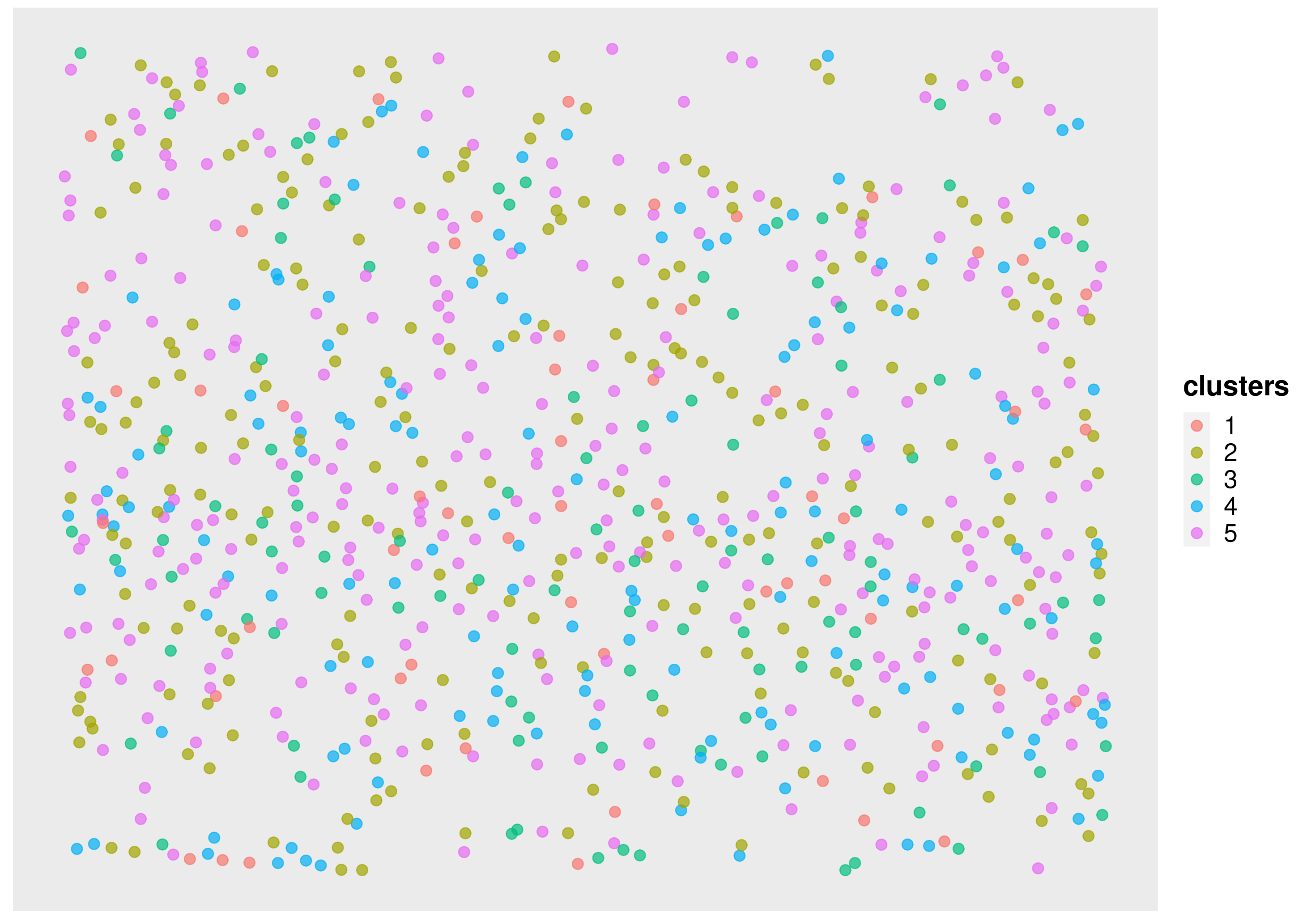}
  \caption{Clustering using \texttt{DR.SC} R package for sample 1.}
  \label{fig:clustering_drsc1}
\end{subfigure}
\begin{subfigure}[t]{.48\textwidth}
  \centering
  \includegraphics[width=\linewidth]{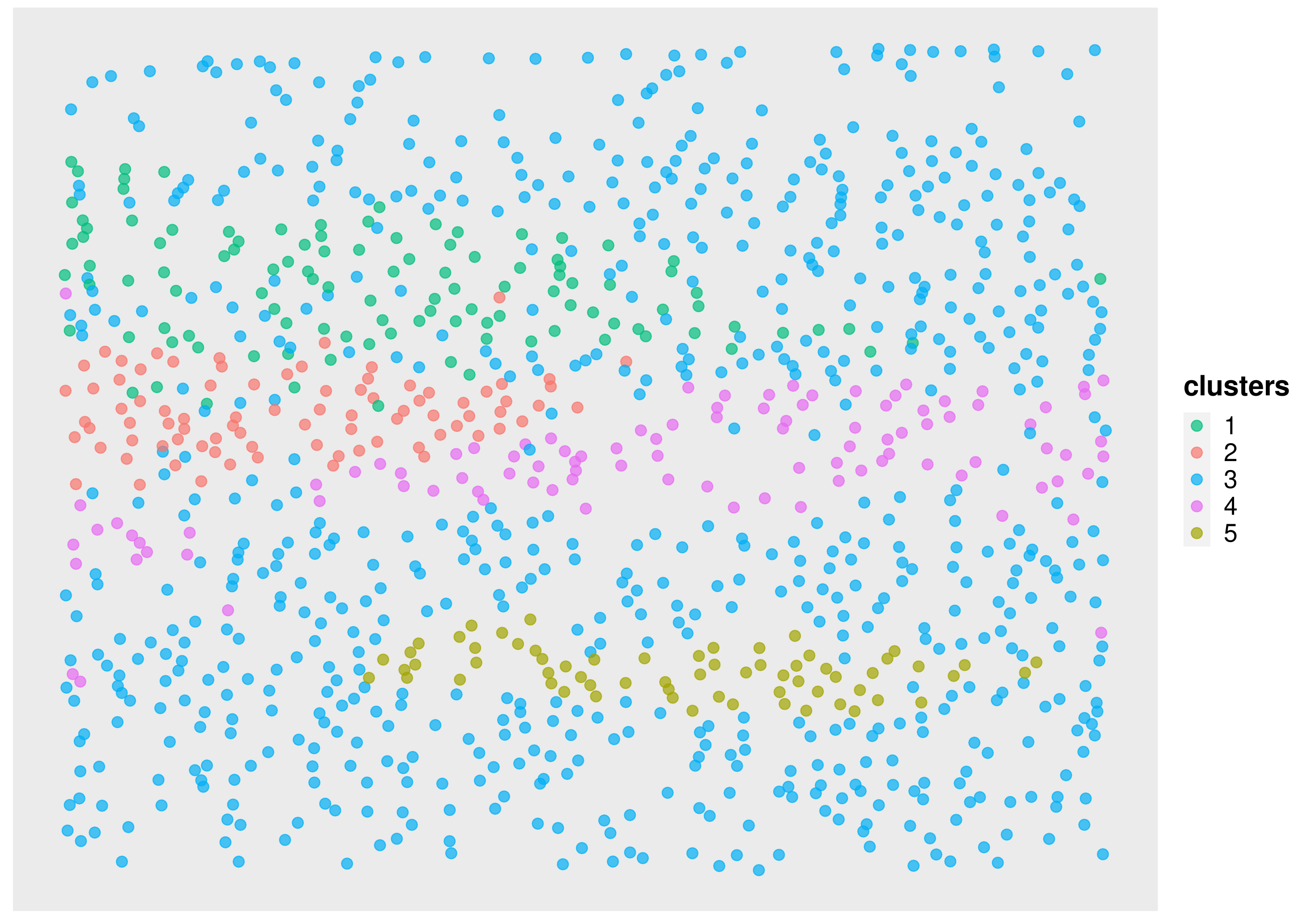}
  \caption{Clustering with spectral clustering with posterior spatial correlations for sample 2.}
  \label{fig:clustering_bayes2}
\end{subfigure}
\begin{subfigure}[t]{.48\textwidth}
  \centering
  \includegraphics[width=\linewidth]{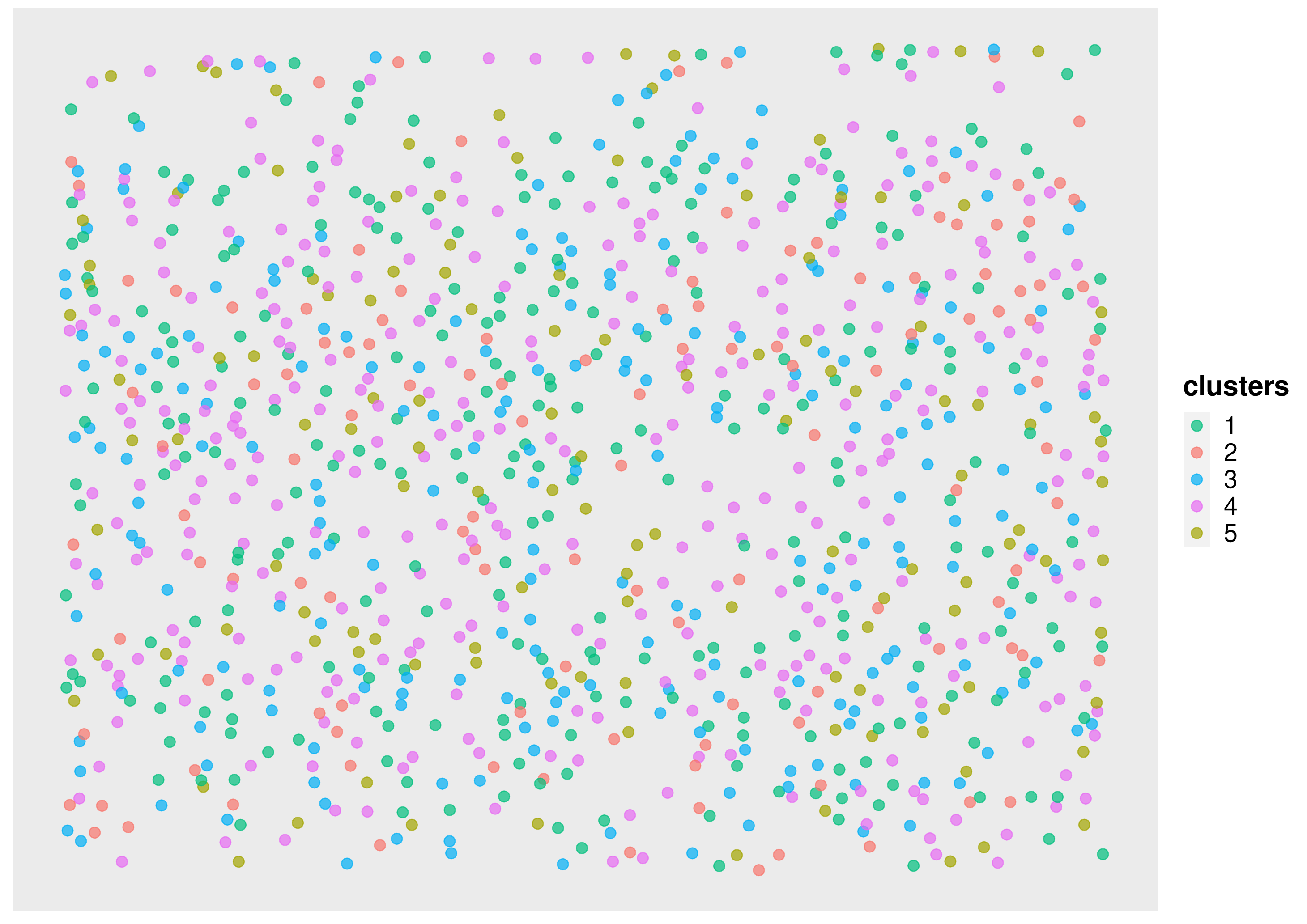}
  \caption{Clustering using \texttt{DR.SC} R package for sample 2.}
  \label{fig:clustering_drsc2}
\end{subfigure}
\begin{subfigure}[t]{.48\textwidth}
  \centering
  \includegraphics[width=\linewidth]{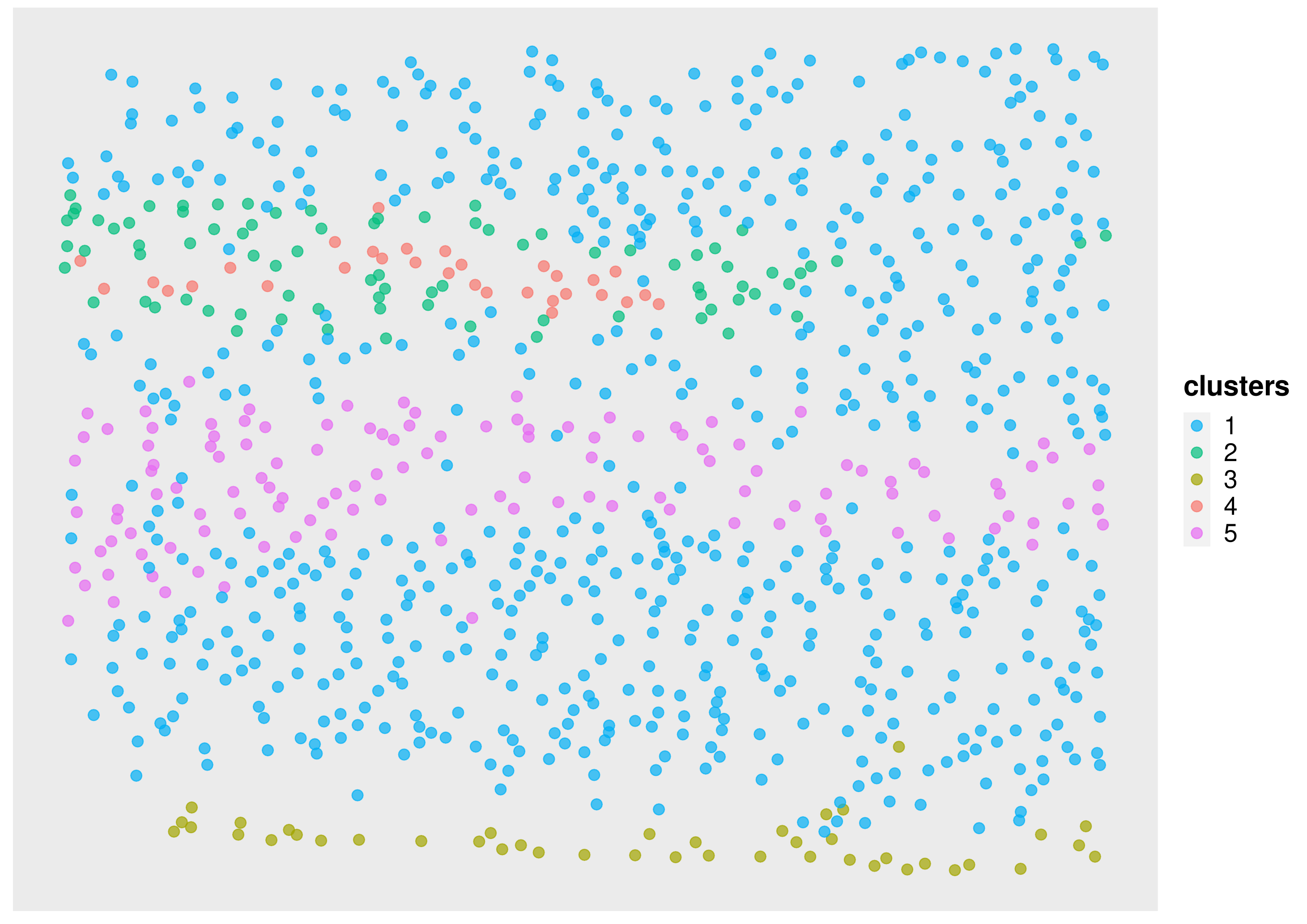}
  \caption{Clustering with spectral clustering with posterior spatial correlations for sample 3.}
  \label{fig:clustering_bayes3}
\end{subfigure}
\begin{subfigure}[t]{.48\textwidth}
  \centering
  \includegraphics[width=\linewidth]{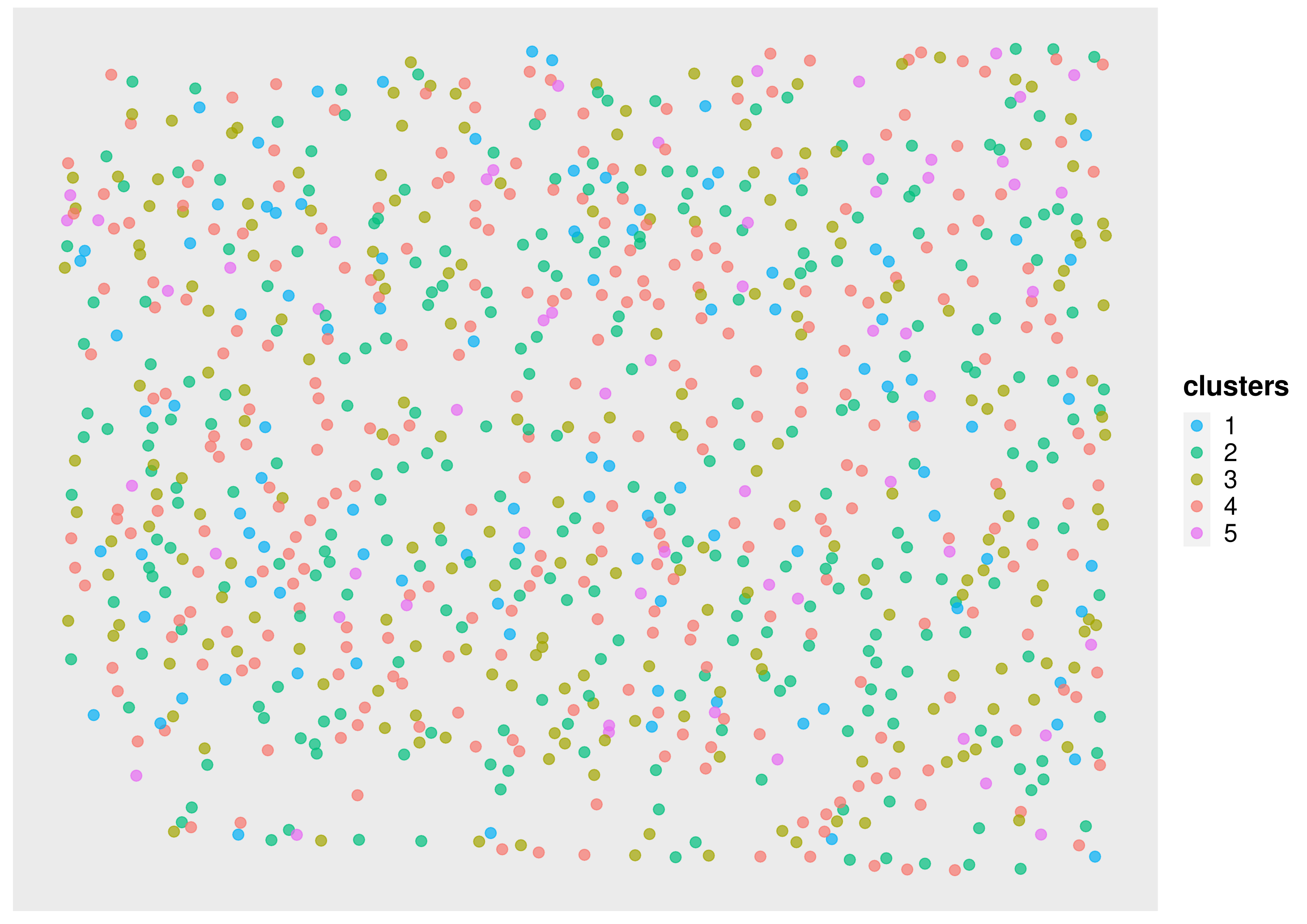}
  \caption{Clustering using \texttt{DR.SC} R package for sample 3.}
  \label{fig:clustering_drsc3}
\end{subfigure}
\end{figure}
\begin{figure}[http]
\ContinuedFloat
\centering
\begin{subfigure}[t]{.48\textwidth}
  \centering
  \includegraphics[width=\linewidth]{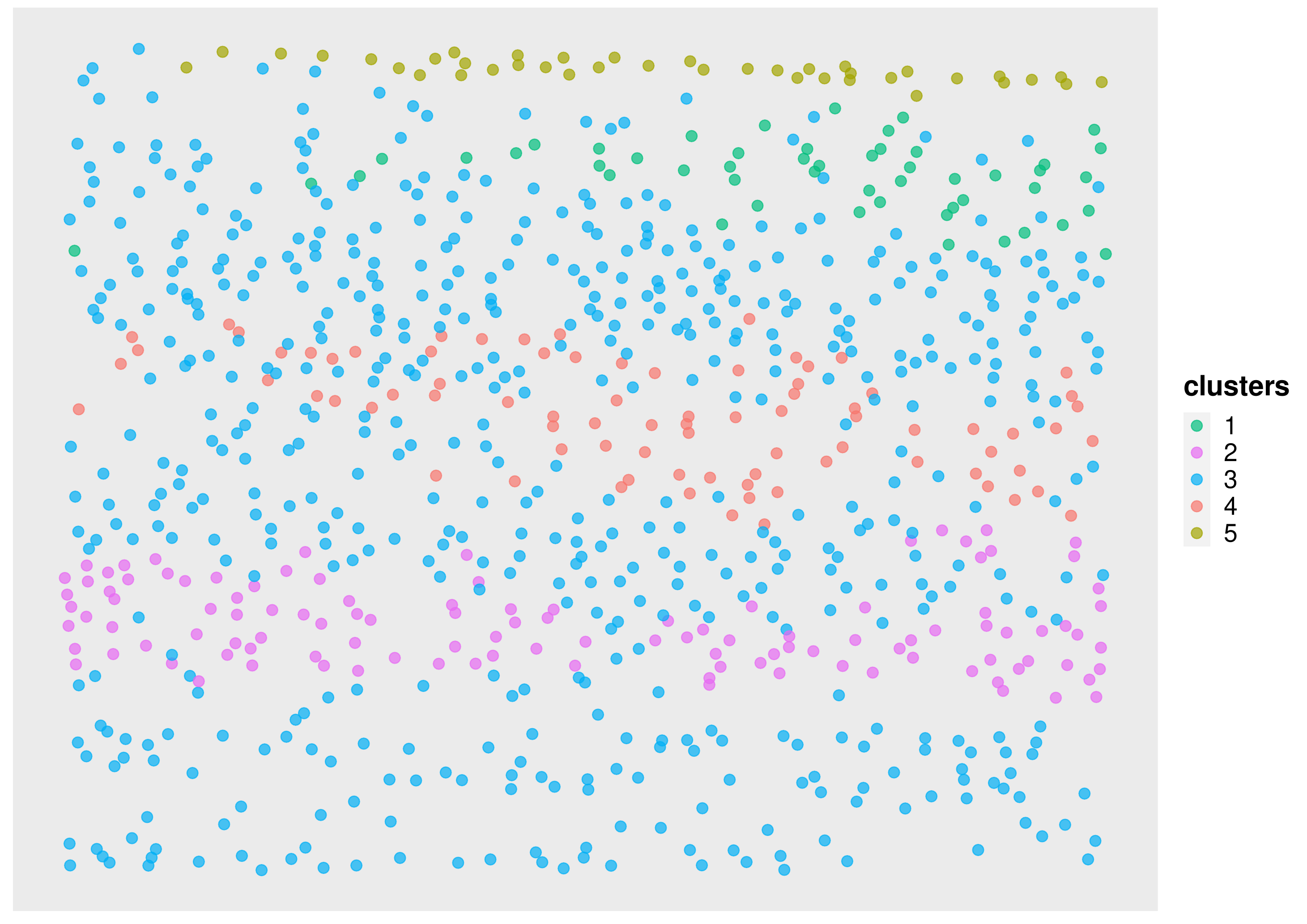}
  \caption{Clustering with spectral clustering with posterior spatial correlations for sample 4.}
  \label{fig:clustering_bayes4}
\end{subfigure}
\begin{subfigure}[t]{.48\textwidth}
  \centering
  \includegraphics[width=\linewidth]{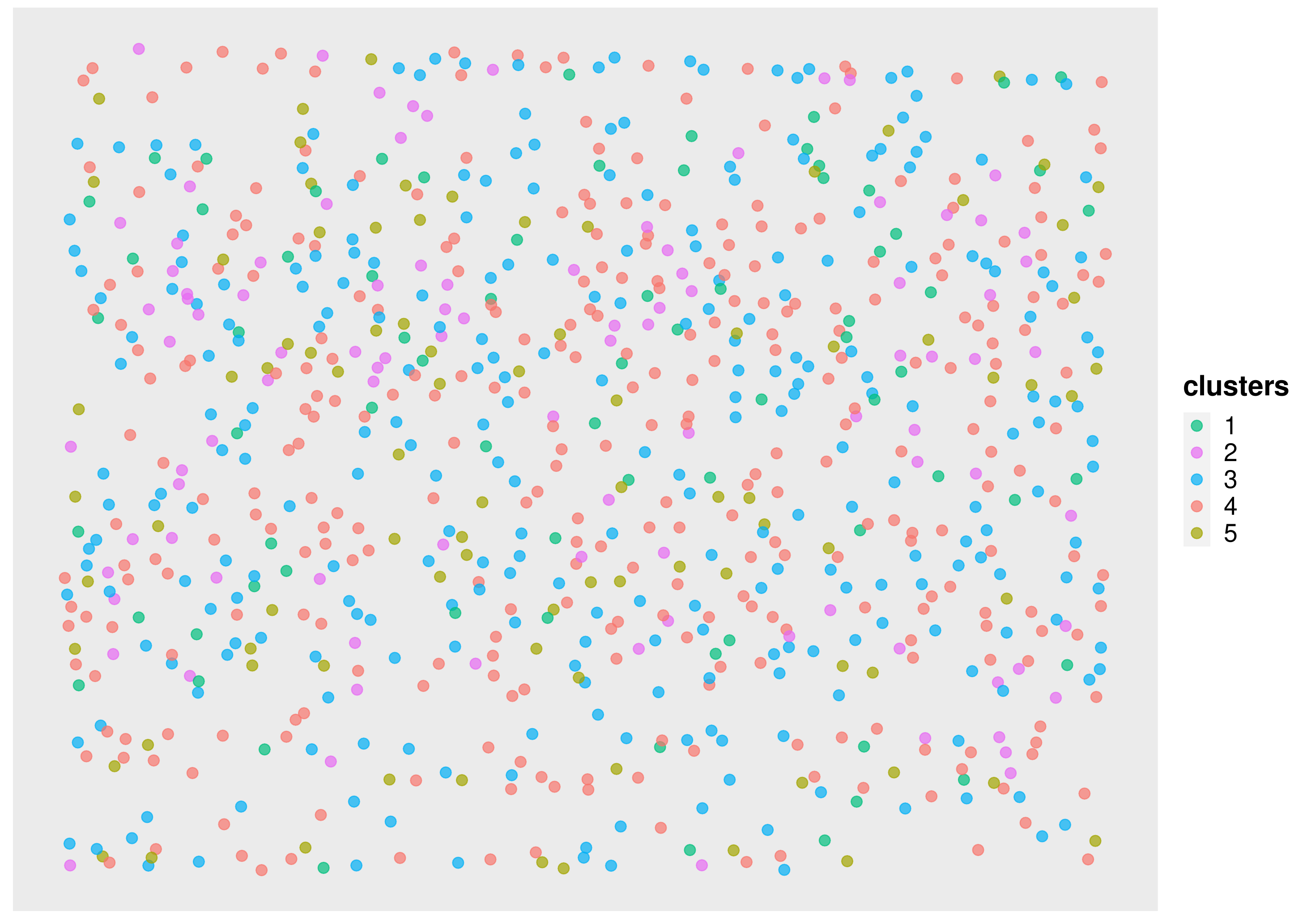}
  \caption{Clustering using \texttt{DR.SC} R package for sample 4.}
  \label{fig:clustering_drsc4}
\end{subfigure}
\caption{Multi-sample. Spatial clustering from the proposed method compared with the state of the art method as implemented by \texttt{DR.SC} R package.}
\label{fig:spatial clustering with relicated data}
\end{figure}

\section{Discussion}\label{sec:discussion}
We have introduced a Bayesian nonparametric method of estimation of covariance matrices for matrix-variate data wherein both the rows and columns of the matrix-variate data are correlated by the very design of the study. We have further extended our method to the case where we have multiple independent samples of the matrix-variate data observed over a possibly different set of spatial locations but a common set of genes making up the rows of the matrices. We have illustrated the power of our method using simulations and real data where we made comparison  with existing methods. \par
There are a few possible future directions for this work. First, it may be possible to consider spatial transcriptomic studies with large number of observed genes. The challenge is to define the joint distribution over the matrix-variate data, which along with the estimation of covariance matrices would allow for automatic selection of relevant genes from the entire gene set through some Bayesian variable selection criterion. Second, it may be possible to incorporate some aspects of graphical models to understand further the interaction between the genes. Third, it may be possible to speed up computations using variational inference methods. Fourth, it may also be possible to consider spatial modelling of the raw gene expression data using multivariate zero-inflated count distributions.
\clearpage
\bibliographystyle{apalike}
\bibliography{references}

\end{document}